\def\ps@pprintTitle{%
  \let\@oddhead\@empty
  \let\@evenhead\@empty
  \let\@oddfoot\@empty
  \let\@evenfoot\@oddfoot
}
\DeclareMathAlphabet{\mathpzc}{OT1}{pzc}{m}{it}
\def\iid{\buildrel {\rm i.i.d.} \over \sim}
\def\eqd{\buildrel {\rm d} \over =}
\def\i.i.d.{\buildrel {\rm i.i.d.} \over \sim}
\def\cw#1 { \overset{\mathbb{P}}{\underset{#1}{\longrightarrow}} }
\def\Real{\mathbb{R}}
\def\Natu0{\mathbb{N}_0}
\def\R{{\mathbb R}}  %%
\def\N{{\mathbb N}}  %%
\def\EE{{\mathbb E}}  %
\def\V{{\mathbb V}}  %%
\def\E#1{{\mathbb E}\left[#1\right]}
\def\Var#1{{\mathbb V}\left(#1\right)}
\def \rcov#1#2 {{\rm cov}_{#1}\left( #2\right)}
\newtheorem{lemma}{Lemma}[section]
\newtheorem{theorem}[lemma]{Theorem}
\newtheorem{corollary}[lemma]{Corollary}
\newtheorem{remark}[lemma]{Remark}
\newtheorem{remarks}[lemma]{Remarks}
\newtheorem{proposition}[lemma]{Proposition}
\newtheorem{model}[lemma]{Model}
\newtheoremstyle{italicized}
  {3pt}   % Space above
  {3pt}   % Space below
  {\itshape} % Body font (italicized)
  {}      % Indent amount
  {\bfseries} % Theorem head font
  {.}     % Punctuation after theorem head
  {.5em}  % Space after theorem head
  {}
\theoremstyle{italicized}
\newtheorem*{model2}{Smith--Miller Model}
\newtheoremstyle{italicized}
  {3pt}   % Space above
  {3pt}   % Space below
  {\itshape} % Body font (italicized)
  {}      % Indent amount
  {\bfseries} % Theorem head font
  {.}     % Punctuation after theorem head
  {.5em}  % Space after theorem head
  {}
\theoremstyle{italicized}
\newtheorem*{model3}{Generalized Smith--Miller Model}
\numberwithin{equation}{section}
\begin{document}
	\begin{titlepage}	
		\thispagestyle{empty}

\title{An Observation-Driven State-Space Model for Claims
Size Modeling}

\author{Jae Youn Ahn\fnref{thirdfoot}\corref{bbb}}%\corref{bbb}
\author{\quad Himchan Jeong \fnref{firstfoot}\corref{bbb}}
\author{\quad Mario V.~W\"uthrich\fnref{fourthfoot}\corref{bbb}}

\cortext[bbb]{Corresponding authors/equal contribution}
\fntext[thirdfoot]{Department of Statistics, Ewha Womans University, Seoul, Republic of Korea. Email: \url{jaeyahn@ewha.ac.kr}}
\fntext[firstfoot]{Department of Statistics and Actuarial Science, Simon Fraser University, BC, Canada. Email: \url{himchan_jeong@sfu.ca}}
\fntext[fourthfoot]{RiskLab, Department of Mathematics, ETH Zurich, Switzerland.
Email: \url{mario.wuethrich@math.ethz.ch}
}
\begin{abstract}
State-space models are popular models in econometrics. Recently,
these models have gained some popularity in the actuarial literature. The best known state-space models are of Kalman-filter type. These models are so-called parameter-driven because the observations do not impact the state-space dynamics. A second
less well-known class of state-space models are so-called observation-driven state-space models where the state-space dynamics is also impacted by the actual observations. A typical example is the
Poisson-Gamma observation-driven state-space model for counts data. This Poisson-Gamma model is fully analytically tractable.
The goal of this paper is to develop a Gamma-Gamma observation-driven state-space model for claim size modeling. We provide fully tractable versions of Gamma-Gamma observation-driven state-space models, and these versions extend the work of
\cite{smith1986non} by allowing for a fully flexible variance behavior. Additionally, we demonstrate that the proposed model aligns with evolutionary
credibility, a methodology in insurance that dynamically adjusts premium rates over time using evolving data.
\end{abstract}

\maketitle

\textbf{Keywords:} observation-driven state-space model, evolutionary credibility, claim size 

JEL Classification: C300

\end{titlepage}

\newpage

\section{Introduction and motivation}

Static random effect models \citep{laird1982random, lee1996hierarchical} have a long tradition in insurance ratemaking to account for heterogeneity among insurance policyholders
\citep{bichsel1964erfahrungs}. These models assume that the individual claim behavior is influenced by an unobserved latent variable, which remains constant over time for a given policyholder. By estimating this latent variable from past observations, static random effect models provide a systematic way to predict future claims based on past claims experience. This approach has proven effective in contexts where temporal dynamics are less critical, as it simplifies the modeling process and it requires fewer assumptions about the evolution of risk factors.

However, many in real-world applications it is crucial to capture temporal dynamics and evolving risks, necessitating an extension from static random effect models to dynamic frameworks. State-space models \citep{Kalman1960, anderson2005optimal} naturally comply with this need by introducing latent processes that evolve over time, allowing for the modeling of longitudinal data with temporal dependencies. These models are particularly suited for applications in insurance, where the evolution of risk factors plays a critical role \citep{pinquet2001allowance, bolance2003time}. Parameter-driven state-space models, which rely on stochastic latent processes, provide a robust framework for such settings.
However, these models often lack analytical tractability, as closed-form expressions for filtering and likelihood evaluations are generally unavailable except in limited cases, such as under Gaussian assumptions. As a result, numerical methods are often employed, which increase computational complexity, require significant processing time and which my restrict explainability because sensitivity analyses may not easily be available. This makes practical implementation challenging in large-scale or real-time applications \citep{doucet2000sequential, arulampalam2002tutorial}.

Alternatively, credibility methods \citep{whitney1918, buhlmann1967experience, BuhlmannStraub} can be used as a simpler (approximative) approach to predict claims, and parameter-driven
state-space versions were studied in \citep{pinquet2020poisson, pinquet2020positivity, ahn2021order}. However, credibility methods typically only focus on providing the predictive mean, and they do not yield the full predictive distribution. Of course, this is a significant limitation of credibility methods, e.g., in risk management the full predictive
distributions needs to be studied.

To address these challenges, observation-driven state-space models \citep{smith1986non, harvey1989time} have proven to be a practical and effective alternative. These models maintain the dynamic structure of state-space frameworks while allowing for analytical solutions through the use of conjugate prior relationships. \cite{smith1986non} introduced an observation-driven state-space model for exponentially distributed responses, which was later applied to insurance ratemaking \citep{bolance2007greatest}. This model was subsequently extended to Gamma-distributed responses   using the Bayesian Gamma-Gamma conjugacy \citep{youn2023simple}, increasing its relevance and applicability in insurance modeling. Similarly, \cite{harvey1989time} introduced an observation-driven state-space model tailored for count data.

Despite their advantages, existing observation-driven state-space models \citep{smith1986non, harvey1989time} are limited in their ability to capture diverse variance behaviors. Specifically, these models typically assume an increasing variance process that
asymptotically goes to infinity. This restricts their flexibility in applications where the variance dynamics may vary, may be uniformly bounded or may even be decreasing. The limitations of the model of \cite{harvey1989time} were formally analyzed in \cite{ahn2023classification}, where flexible variance extensions were proposed to address these constraints. The model of \cite{smith1986non} has not yet been evaluated in this regard, leaving its ability to handle diverse variance behaviors unexplored. In this article, we address this gap by examining the limitations of the model of \cite{smith1986non} and providing a generalized framework that overcomes these issues.

To this end, we extend the model introduced in \cite{smith1986non} to develop a flexible observation-driven state-space framework specifically designed for continuous positive data, based on Gamma distributions. These generalized models are able to accommodate a full range of variance behaviors, including stationary, increasing, and decreasing patterns, thereby addressing the constraints of the original framework. By retaining the analytical tractability of the original observation-driven approach, the proposed models ensure closed-form solutions for filtering, estimation and predictive distributions, while significantly enhancing the flexibility in modeling the variance dynamics.
These features make these models particularly well-suited for applications involving continuous positive data, such as insurance claim sizes.
Furthermore, we show that the proposed model is consistent with evolutionary credibility, dynamically adjusting to evolving data for application in insurance pricing strategies.

This manuscript is structured as follows. We first revisit the Smith–Miller Model, paying particular attention to the variance behavior of its state-space process. To achieve full flexibility in variance behavior, we introduce the Generalized Smith–Miller Model in Section \ref{sec: Generalized Smith--Miller Model} and analyze its variance dynamics in Section \ref{sec.4}. Section \ref{sec: Forecasting and evolutionary credibility} covers model fitting, forecasting, and the evolutionary credibility formula within the Generalized Smith–Miller framework. Sections \ref{section simulation study} and \ref{section real study} present a simulation study and a real data analysis, respectively. Finally, concluding remarks are provided in Section \ref{Concluding remarks}.

%This manuscript is structured as follows.
%We first revisit the Smith--Miller Model and we pay
%special attention to the variance behavior of its state-space process.
%To allow for full flexibility in this variance behavior, we introduce
%a Generalized Smith--Miller Model in Section \ref{sec: Generalized Smith--Miller Model}, and in Section \ref{sec.4}, we analyze the resulting
%variance behaviors. In Section \ref{sec: Forecasting and evolutionary credibility}, we address model fitting, forecasting and we give the evolutionary credibility formula in the Generalized Smith--Miller Model.
%Sections \ref{section simulation study} and \ref{section real study} give a simulation study and a real data analysis. Finally, in Section \ref{Concluding remarks}, we conclude.

\section{Revisiting the Smith--Miller Model}

We start by revisiting the \textit{Smith--Miller Model}  \citep{smith1986non} which originally utilized the exponential distribution to characterize the observation distribution. This framework was later extended  \citep{youn2023simple} to include the more general Gamma distribution, allowing additionally for
over-dispersion. We emphasize the need to generalize this latter model by demonstrating that it can accommodate only a very limited variance behavior of the state-space dynamics.

We begin by introducing the notation. For finite sequences, let $Y_{1:t} = (Y_1, \ldots, Y_t)$, and interpret $Y_{1:0}$ as the empty sequence, which generates the trivial $\sigma$-field $\sigma(Y_{1:0}) = \{\emptyset, \Omega\}$ on the underlying probability space $(\Omega, \mathcal{F}, \mathbb{P})$.

We denote by $\Gamma(\alpha, \beta)$ the Gamma distribution
with shape parameter $\alpha > 0$ and scale parameter $\beta > 0$. It is positively supported and has probability density function
on $\R_+$
   \[
   f\left( x ; \alpha, \beta \right) =
     \frac{\beta^\alpha}{\Gamma(\alpha)}\, x^{\alpha-1}\exp\left(-\beta x\right) \qquad \text{ for $x>0$}.
       \]
The mean and variance are given by $\alpha/\beta$ and $\alpha/\beta^2$, respectively.

\subsection{Review of the Smith--Miller Model}

The Smith--Miller Model involves two stochastic processes: a latent state-space process $(\Theta_t)_{t \ge 1}$, which acts as a hidden driver of the dynamics, and an observable process of response variables $(Y_t)_{t \ge 1}$, which depends on the latent state-space process.
A key feature of the model is the incorporation of a feedback loop from the responses to the state-spaces, impacting the state-space updates, as see \eqref{stillgamma00}, below.
This feedback mechanism distinguishes observation-driven state-space models from their parameter-driven counterparts of \cite{Kalman1960} type.
Specifically, following the classification in \cite{cox1981statistical}, parameter-driven state-space models are defined by specifying the state-space updates
\[\text{from} \quad
\left.\Theta_t \right|_{\Theta_{1:t}} \quad \text{to} \quad \left.\Theta_{t+1} \right|_{\Theta_{1:t}}.
\]
In contrast, in observation-driven models, the state-space update is defined by specifying the state-space transitions
\[\text{from} \quad
\left.\Theta_t \right|_{Y_{1:t}} \quad \text{to} \quad \left.\Theta_{t+1} \right|_{Y_{1:t}}.
\]
This incorporates the observed responses into the state-space update.

\begin{model2}[\cite{smith1986non, youn2023simple}]
%\label{model.0}
Consider a fixed dispersion parameter $\psi>0$, initialization $a_{1|0}>1$, and two exogenous sequences
$(v_t)_{t\ge 1}\subseteq \N_0$ and $
(\mu_t)_{t\ge 1}\subseteq \R_+$.
The processes of the responses $(Y_t)_{t \ge 1}$ and  of
the state-spaces $(\Theta_t)_{t \ge 1}$ are defined as follows:

\begin{itemize}
\item[(0)] \textbf{Initialization at $t=1$}: Initialize the state-space process $(\Theta_t)_{t \ge 1}$ by
\begin{equation}
\label{eq1}
\left.\Theta_1 \right|_{Y_{1:0}} \sim \Gamma\left(1 + a_{1|0}, b_{1|0}\right),
\end{equation}
with $a_{1|0}=b_{1|0}$ giving unit mean of the inverse
initial state-space, $\EE[\Theta_1^{-1}]=1$.

\item[(1)] \textbf{Observation equation for $t \ge 1$}: For given $Y_{1:t-1}$ and $\Theta_{1:t}$, the response $Y_t$ satisfies
\begin{equation}
\label{equation2}
\left. Y_t \right|_{Y_{1:t-1}, \Theta_{1:t}} \sim \Gamma\left(\frac{v_t}{\psi}, \frac{\Theta_t}{\mu_t \psi}\right), \quad \text{if } v_t > 0,
\end{equation}
and $Y_t \vert_{Y_{1:t-1}, \Theta_{1:t}} = 0$, almost surely, if $v_t = 0$.

\item[(2)] \textbf{State-space update from $t$ to $t+1$}: Assume the filtering distribution at time $t \ge 1$ is given by
\begin{equation}\label{filter 1}
\left.\Theta_t\right|_{Y_{1:t}} \sim \Gamma\left(1 + a_t, b_t\right),
\end{equation}
then, the predictive distribution for $\Theta_{t+1}$ is assumed to satisfy
\begin{equation}\label{stillgamma00}
\left.\Theta_{t+1}\right|_{Y_{1:t}} \sim \Gamma\left(1 + a_{t+1|t}, b_{t+1|t}\right),
\end{equation}
where $a_{t+1|t}$ and $b_{t+1|t}$ follow the updating rule
\[
  a_{t+1|t} = q_t a_t> 1 \qquad \text{ and } \qquad
  b_{t+1|t} = q_t b_t > 0,
\]
with $q_t$ being defined by
\begin{equation}\label{eq.2}
q_t = \frac{\gamma\left(a_t - 1\right) + 1}{a_t},
\end{equation}
for a constant parameter $\gamma \in (0,1]$.
\end{itemize}
\end{model2}

\begin{remarks}\normalfont
\begin{itemize}
\item
We show in \eqref{eq41}, below, that \eqref{filter 1}
is a straightforward implication of Bayes' rule. Together with
\eqref{eq.2} and the initial condition $a_{1|0}>1$, this implies
that $a_{t}>1$ as well as $a_{t|t-1}>1$, for all $t\ge 1$, so that the conditional variances of the inverse state-spaces are well-defined for all $t\ge 1$.
\item
We discuss the observation equation \eqref{equation2} and its application to insurance.
For $v_t\ge 1$, it provides
us with the first two conditional moments
\begin{equation}\label{Bayesian filtering Gamma}
\EE \left[\left. Y_{t} \right|Y_{1:t-1}, \Theta_{1:t}\right]
= \frac{v_t\mu_t}{\Theta_t} \qquad \text{ and } \qquad
\Var{\left. Y_{t} \right|Y_{1:t-1}, \Theta_{1:t}}
= v_t\psi\,\frac{\mu^2_t}{\Theta^2_t}.
\end{equation}
This Gamma response $Y_{t}$ is in its additive form;
see \cite{jorgensen1997theory} for the difference between reproductive and additive forms.
For a non-unit exposure\footnote{The models in \cite{smith1986non} and \cite{youn2023simple} do not include the exposure $v_t$ in the observation equation \eqref{equation2}, but for insurance modeling this is very convenient.} $v_t\ge 2$, it can be
interpreted as the sum of $v_t$ conditionally i.i.d.~gamma distributed
individual claims, that is,
\[
\left. Z_{t, 1}, \ldots, Z_{t, v_t} \right\vert_{\Theta_t} \iid \Gamma(1/\psi, \Theta_t/(\mu_t \psi)),
\]
since we
have the distributional identity
\begin{equation}\label{eqn:distid}
\left. Y_{t}\right|_{Y_{1:t-1}, \Theta_{1:t}}~ \stackrel{\rm (d)}{=} ~ \sum_{j=1}^{v_t} \left. Z_{t,j} \right|_{\Theta_t}
\sim
\Gamma\left( \frac{v_t}{\psi}, \frac{\Theta_{t} }{\mu_{t}\psi}\right).
\end{equation}
That is, $Y_t$ can be interpreted as the aggregate claim amount in the given (time) period $t$ of i.i.d.~individual Gamma distributed claims $\left(Z_{t,1}, \ldots, Z_{t, v_t}\right)$, and it is sufficient to
know the aggregate (or average) claim amount $Y_t$ for each
time period $t$, because it forms a sufficient statistics for parameter estimation in this problem.
\end{itemize}
\end{remarks}

The state-space update under Smith--Miller Model gives is a mean-stationary
dynamics of the inverse state-space process $( \Theta_t^{-1})_{t\ge 1}$
when passing from filtering to predictive distribution, i.e.,
\begin{equation}\label{eq.a1}
\E{\left.\Theta_{t+1}^{-1}\right\vert Y_{1:t}} = \E{\left.\Theta_{t}^{-1}\right\vert Y_{1:t}}.
\end{equation}
This in turn implies unconditional mean stationarity of the inverse state-spaces. On the other hand, the conditional variances of the inverse state-spaces are proportionally increasing
\begin{equation}\label{eq.a2}
\Var{\left.\Theta_{t+1}^{-1}\right\vert Y_{1:t}} = \frac{1}{\gamma}\,\Var{\left.\Theta_{t}^{-1}\right\vert Y_{1:t}}.
\end{equation}
Remark, that in dealing with the ``stationarity of the state-spaces'', we are mainly interested in the mean and variance stationarity of the inverse state-space process $(\Theta^{-1}_t)_{t \ge 1}$ rather than the process $(\Theta_t)_{t \ge 1}$ itself. This distinction is motivated by the conditional mean and variance in \eqref{Bayesian filtering Gamma}, which highlights the primary interest in the mean and variance behavior of $(\Theta^{-1}_t)_{t \ge 1}$.

The popularity of the Smith--Miller Model is explained by the fact
that it allows for explicit formulations of the predictive means
$\EE[Y_{t+1}| Y_{1:t}]$ and the probability density functions of the
response variables $Y_{1:t}$ in a recursive form \citep{youn2023simple}.

%%%%%%%%%%%%%%%%%%%%%%%%

\subsection{Variance behavior of the inverse state-space dynamics}
The Smith--Miller Model is widely used in applications, including insurance, due to its intuitive interpretation of the state-space updates, see \eqref{eq.a1} and \eqref{eq.a2}, and its property of a closed-form expression of the likelihood function. However, the Smith--Miller Model is not fully flexible, as the variance behavior of the inverse state-spaces is increasing, as the following result shows.

\begin{proposition}\label{theo.10}
Consider the setting in the Smith--Miller Model. Then, we have
\[
\Var{\Theta^{-1}_{t}} \le
\Var{\E{\Theta^{-1}_{t} \,|\, Y_{1:t}}}
+
\frac{1}{\gamma}\,\EE\left[\Var{\Theta^{-1}_{t} \,|\, Y_{1:t}}\right]
=
\Var{\Theta^{-1}_{t+1}}
<\infty,\quad t\ge 1
\]
with a strict inequality for $\gamma \in (0,1)$.
\end{proposition}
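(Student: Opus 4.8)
The plan is to push everything through the law of total variance, feeding in the two one-step identities \eqref{eq.a1} and \eqref{eq.a2} already established for the Smith--Miller Model, and then to obtain the finiteness claim by a short induction on $t$.

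First I would record the conditional moments of the inverse state-spaces. From the filtering distribution \eqref{filter 1} we have $\Theta_t\mid Y_{1:t}\sim\Gamma(1+a_t,b_t)$ with $a_t>1$, so conditionally on $Y_{1:t}$ the variable $\Theta_t^{-1}$ is inverse-Gamma with shape $1+a_t>2$; hence $\E{\Theta_t^{-1}\mid Y_{1:t}}=b_t/a_t$ and $\Var{\Theta_t^{-1}\mid Y_{1:t}}=b_t^2/\bigl(a_t^2(a_t-1)\bigr)$ are both finite, and the latter is \emph{strictly} positive since $b_t>0$. The same inverse-Gamma formulas applied to the predictive distribution \eqref{stillgamma00} (legitimate since $a_{t+1|t}>1$) show $\Theta_{t+1}^{-1}$ likewise has finite conditional mean and variance given $Y_{1:t}$; combined with the updating rule $a_{t+1|t}=q_ta_t$, $b_{t+1|t}=q_tb_t$ and $q_ta_t-1=\gamma(a_t-1)$, they reproduce exactly \eqref{eq.a1} and \eqref{eq.a2}, which I will simply invoke.

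Next, the central equality. Applying the law of total variance to $\Theta_{t+1}^{-1}$, conditioning on $Y_{1:t}$, gives
\[
\Var{\Theta_{t+1}^{-1}}=\Var{\E{\Theta_{t+1}^{-1}\mid Y_{1:t}}}+\EE\left[\Var{\Theta_{t+1}^{-1}\mid Y_{1:t}}\right].
\]
By \eqref{eq.a1} the first term equals $\Var{\E{\Theta_t^{-1}\mid Y_{1:t}}}$ and by \eqref{eq.a2} the second equals $\tfrac1\gamma\,\EE\left[\Var{\Theta_t^{-1}\mid Y_{1:t}}\right]$, i.e.\ precisely the middle quantity in the proposition. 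For the left-hand inequality I would apply the law of total variance to $\Theta_t^{-1}$ itself, namely $\Var{\Theta_t^{-1}}=\Var{\E{\Theta_t^{-1}\mid Y_{1:t}}}+\EE\left[\Var{\Theta_t^{-1}\mid Y_{1:t}}\right]$, and compare with the middle quantity using $1/\gamma\ge1$ for $\gamma\in(0,1]$; when $\gamma\in(0,1)$ the factor is strictly larger than $1$ while $\EE\left[\Var{\Theta_t^{-1}\mid Y_{1:t}}\right]>0$ by the strict positivity above, which yields the strict inequality.

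It remains to show $\Var{\Theta_{t+1}^{-1}}<\infty$, which I would prove by induction on $t$. Mean stationarity gives $\E{\Theta_t^{-1}}=1<\infty$ for every $t$, so the law of total variance is valid throughout (no $\infty-\infty$ ambiguity). For $t=1$, $\Theta_1\sim\Gamma(1+a_{1|0},b_{1|0})$ has shape $1+a_{1|0}>2$, hence $\Var{\Theta_1^{-1}}<\infty$. Assuming $\Var{\Theta_t^{-1}}<\infty$, the law of total variance forces both $\Var{\E{\Theta_t^{-1}\mid Y_{1:t}}}<\infty$ and $\EE\left[\Var{\Theta_t^{-1}\mid Y_{1:t}}\right]<\infty$, so the central equality gives $\Var{\Theta_{t+1}^{-1}}=\Var{\E{\Theta_t^{-1}\mid Y_{1:t}}}+\tfrac1\gamma\,\EE\left[\Var{\Theta_t^{-1}\mid Y_{1:t}}\right]<\infty$ since $\gamma>0$. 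The argument is essentially bookkeeping and I do not expect a genuine obstacle; the only point needing care is moment existence — checking that every expectation and variance written down is finite before it is manipulated — which is handled by the standing bounds $a_t>1$ and $a_{t|t-1}>1$ from the Remarks (pushing the relevant Gamma shapes above $2$) together with propagating finiteness through the recursion rather than trying to bound $\EE\left[b_t^2\right]$ directly.
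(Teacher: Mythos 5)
Your proof is correct and follows essentially the same route as the paper's: the law of total variance conditioning on $Y_{1:t}$ applied to both $\Theta_t^{-1}$ and $\Theta_{t+1}^{-1}$, the identities \eqref{eq.a1}--\eqref{eq.a2} (which you additionally verify from the inverse-Gamma moments via $q_ta_t-1=\gamma(a_t-1)$), the bound $1\le 1/\gamma$, and strict positivity of $\EE\left[\Var{\Theta_t^{-1}\mid Y_{1:t}}\right]$ for the strict inequality when $\gamma\in(0,1)$. Your explicit induction from $a_{1|0}>1$ establishing $\Var{\Theta_{t+1}^{-1}}<\infty$ is a slightly more careful treatment of the finiteness claim, which the paper leaves implicit.
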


\begin{proof}[Proof of Proposition \ref{theo.10}]
For $\gamma \in (0,1]$, the tower property for conditional expectations
gives us
\begin{eqnarray*}
\Var{\Theta^{-1}_{t}}
&=&\Var{\E{\Theta^{-1}_{t} \,|\, Y_{1:t}}}
+
\EE\left[\Var{\Theta^{-1}_{t} \,|\, Y_{1:t}}\right]
\\&\le&
\Var{\E{\Theta^{-1}_{t} \,|\, Y_{1:t}}}
+
\frac{1}{\gamma}\,\EE\left[\Var{\Theta^{-1}_{t} \,|\, Y_{1:t}}\right]
\\&=&
\Var{\E{\Theta^{-1}_{t+1} \,|\, Y_{1:t}}}
+
\EE\left[\Var{\Theta^{-1}_{t+1} \,|\, Y_{1:t}}\right]
=
\Var{\Theta^{-1}_{t+1}},
\end{eqnarray*}
where the last equality follows from \eqref{eq.a1} and \eqref{eq.a2}. Furthermore, since $\E{\Var{\Theta_t^{-1}\,|\, Y_{1:t}}}>0$, the equality becomes strict for $\gamma\in(0,1)$.
This completes the proof.
\end{proof}

\section{Generalized Smith--Miller Model}
\label{sec: Generalized Smith--Miller Model}

Proposition \ref{theo.10} demonstrates that the state-space updates \eqref{stillgamma00} cannot lead to a decreasing or stationary variance
behavior of the inverse state-space dynamics.
We introduce a generalized version of the Smith--Miller Model. Modifying the state-space update \eqref{stillgamma00} appropriately will allow for flexible variance behaviors, and their detailed analysis is provided in Section \ref{sec.4}, below.
The following describes the generalized version of the Smith--Miller Model.

\begin{model3}
Consider a fixed dispersion parameter $\psi>0$, initialization $a_{1|0} > 0$, and three exogenous sequences
\begin{equation*}
\left(\xi_t\right)_{t\ge 1} \subseteq \Xi, \quad
\left(v_t\right)_{t\ge 1} \subseteq \mathbb{N}_0, \quad \text{and} \quad \left(\mu_t\right)_{t\ge 1} \subseteq \mathbb{R}_+,
\end{equation*}
where $\Xi \subseteq \mathbb{R}^K$ is a convex parameter space for a positive integer $K$.
The processes of the responses $(Y_t)_{t \ge 1}$ and the state-spaces $(\Theta_t)_{t \ge 1}$ are assumed to have the same initialization
\eqref{eq1}, $t=1$, and the same observation equation \eqref{equation2}, $t\ge 1$, as the Smith--Miller Model.
The state-space update from time $t$ to time $t+1$
is defined as follows:

\begin{itemize}
\item[(2)] \textbf{State-space update from $t$ to $t+1$}: Assume the filtering distribution at time $t \ge 1$ is given by
\begin{equation}
\label{eq.3400}
\left.\Theta_t\right|_{Y_{1:t}} \sim \Gamma\left(1 + a_t, b_t\right),
\end{equation}
then, the predictive distribution for $\Theta_{t+1}$ is assumed to satisfy
\begin{equation}
\label{stillgamma0}
\left.\Theta_{t+1}\right|_{Y_{1:t}} \sim \Gamma\left(1 + a_{t+1|t}, b_{t+1|t}\right),
\end{equation}
where $a_{t+1|t}$ and $b_{t+1|t}$ follow the recursion
\begin{equation}
\label{eq.update}
%\begin{aligned}
  a_{t+1|t} = A(a_t, b_t; \xi_t) > 0
  \qquad \text{ and } \qquad
  b_{t+1|t} = B(a_t, b_t; \xi_t) > 0,
%\end{aligned}
\end{equation}
for given measurable functions $A: \mathbb{R}^2_+ \times \Xi \to \mathbb{R}_+$ and $B: \mathbb{R}^2_+ \times \Xi \to \mathbb{R}_+$ specified below.
\end{itemize}
\end{model3}

With suitable choices for the update functions $A$ and $B$ in \eqref{eq.update}, the model results in fully tractable models with mean stationary state-space processes. This is detailed in the next subsection.
Before proceeding to this, we briefly address the well-definedness of the Generalized Smith--Miller Model.
Although we introduced formula \eqref{eq.3400} as an assumption, it is, in fact, a direct consequence of the conditional response distribution \eqref{equation2} and the latent distribution \eqref{stillgamma0}.
This result is obtained by applying a Bayesian inference step, often referred to as \textit{filtering} in the time series literature, within a Gamma-Gamma conjugate prior framework.
To clarify this, we briefly explain the process, generically denoting a density by $f$. By applying Bayes' formula
\begin{eqnarray*}
f\left(\left.\Theta_t \right|Y_{1:t}\right)
&\propto&
f\left(\left.Y_t \right|Y_{1:t-1}, \Theta_t\right)
f\left(\left.\Theta_t \right|Y_{1:t-1}\right),
\end{eqnarray*}
and substituting the explicit forms of the Gamma densities, the filtering distribution becomes a Gamma distribution with the following parameter updates
\begin{equation}\label{eq41}
    a_t := a_{t|t-1} + \frac{v_t}{\psi}>0
    \qquad\hbox{and}\qquad
    b_t := b_{t|t-1} + \frac{Y_t}{\mu_t \psi}>0.
\end{equation}
In this sense, formula \eqref{eq.3400} is not a model assumption but rather a (mathematical) consequence of the state-space update in \eqref{stillgamma0}, the conditional response distribution \eqref{equation2}, and the initial condition \eqref{eq1}. As a result, the Generalized Smith--Miller Model is well-defined.

\bigskip

%\subsection{Linear evolutionary model with stationary mean}

We discuss the state-space update \eqref{stillgamma0}, and we specify explicit functional forms of $A$ and $B$ in \eqref{eq.update} that result in linearly tractable models with a {\it mean stationary}
inverse state-space dynamics.
Among various possible choices of the update functions $A$ and $B$, we consider the following positive affine functions for $\xi_{t}=( \xi_{1,t}, \ldots, \xi_{6,t})^\top\in \Xi$
\[
a_{t+1 | t} = A(a_t, b_t;\xi_{t}) = \xi_{1,t} +  \xi_{2,t} a_t + \xi_{3,t} b_t>0,
\]
and
\[
b_{t+1 | t} = B(a_t, b_t;\xi_{t}) = \xi_{4,t} +  \xi_{5,t} a_t + \xi_{6,t} b_t>0.
\]
If, for some $\Delta_t\in(0,1]$, we furthermore require the following thinning property
\begin{equation}\label{thinning a}
\E{\Theta_{t+1}^{-1}\,|\, Y_{1:t}} = \Delta_t \E{\Theta_{t}^{-1}\,|\, Y_{1:t}} +\left( 1-\Delta_t\right), \qquad \text{ for all $a_t, b_t>0$,}
\end{equation}
then this implies the restriction for $t\ge 1$
\[
\xi_{1,t}=\xi_{3,t}= \xi_{4,t}=0 \qquad\hbox{and}\qquad \xi_{6,t}=\Delta_t \xi_{2,t}.
\]
The thinning in
\eqref{thinning a} is motivated by the intuition that we would
like to carry forward some past information $Y_{1:t}$, reflected
by the first term in \eqref{thinning a}, but we also want to
add some new noise to the state-space process reflected
by the second term $(1-\Delta_t)\cdot 1$, where this can be
interpreted as the prior mean part $\EE[\Theta^{-1}_t]=1$.
The validity of this prior mean is implied
by mean stationarity, which still needs to be established,
see Proposition \ref{lemma 1 mean stationarity}, below.
After appropriate reparametrization, we present the following linear evolutionary model as a specific instance of the Generalized Smith–Miller Model.
\begin{model}
\label{model.2}
Consider a fixed dispersion parameter $\psi>0$, initialization $a_{1|0} > 0$, and four exogenous sequences
\begin{equation*}
(p_t)_{t\ge 1}\subseteq [0,1], \quad (q_t)_{t\ge 1}\subseteq \Real_+, \quad
(v_t)_{t\ge 1}\subseteq \N_0 \quad
\hbox{and}\quad (\mu_t)_{t\ge 1}\subseteq \R_+.
\end{equation*}
Consider the processes of the responses  $\left(Y_t\right)_{t \ge 1}$ and the state-spaces $\left(\Theta_t\right)_{t\ge 1}$ as in the Generalized Smith--Miller Model, where the general recursions \eqref{eq.update} are replaced by, for $t \ge 1$,
\begin{equation}\label{eq.update22}
 a_{t+1|t} = (p_t+q_t)a_t\qquad\hbox{and}\qquad
 b_{t+1|t} = p_ta_t + q_tb_t.
\end{equation}
\end{model}
Model \ref{model.2} is well-defined if $a_{t+1|t}>0$ and $b_{t+1|t}>0$ for all $t\ge 0$, which
is an immediate consequence of recursions \eqref{eq41}
and \eqref{eq.update22} as well as the initialization $a_{1|0}
=b_{1|0}>0$.
Simple algebraic computations show that $\Delta_t$ in \eqref{thinning a} can be represented as
\begin{equation}\label{credibility Delta}
\Delta_t=q_t/(p_t+q_t)~\in~(0,1].
\end{equation}
In view of \eqref{eq41} along with the updates in \eqref{eq.update22}, the observations
$(Y_t)_{t\ge 1}$ only enter the sequence $(b_{t+1|t})_{t\ge 1}$,
and, henceforth, $(a_{t+1|t})_{t\ge 1}$ is deterministic.

\begin{remark}\normalfont
Parametric representations of the state-space updates, such as the one from \eqref{eq.3400} to \eqref{stillgamma0}, have been highlighted as key steps in the related literature \citep{smith1986non, harvey1989time, youn2023simple, ahn2023classification}. These representations are often presented as tools for simulating the state-space process and providing intuitive interpretations of the model dynamics. We could present similar
parametric representations for Model \ref{model.2}, however, it is important to realize that such parametric representations are not
unique. This non-uniqueness diminishes the interpretive value of these constructions, as different parametrizations can result in the same underlying dynamics.
\end{remark}

\begin{proposition} \label{lemma 1 mean stationarity}
Under the setting of Model \ref{model.2}, we have mean stationarity $\EE[\Theta_t^{-1}]=1$,
$t\ge 1$.
\end{proposition}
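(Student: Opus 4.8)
The plan is to argue by induction on $t$, using the thinning property \eqref{thinning a} together with the tower property of conditional expectation. The single elementary fact needed is that if $X\sim\Gamma(\alpha,\beta)$ with $\alpha>1$, then $\EE[X^{-1}]=\beta/(\alpha-1)$. Applied to the filtering distribution \eqref{eq.3400}, this gives $\E{\Theta_t^{-1}\mid Y_{1:t}}=b_t/a_t$, which is finite because $a_t>0$; the latter follows from the filtering recursion \eqref{eq41} (so $a_t=a_{t|t-1}+v_t/\psi$ with $a_{t|t-1}>0$ and $v_t\ge 0$) and the well-definedness built into Model \ref{model.2}. Applied to the predictive distribution \eqref{stillgamma0}, the same fact gives $\E{\Theta_{t+1}^{-1}\mid Y_{1:t}}=b_{t+1|t}/a_{t+1|t}$, again finite since $a_{t+1|t}>0$.

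For the base case $t=1$, the initialization \eqref{eq1} with $a_{1|0}=b_{1|0}$ yields $\E{\Theta_1^{-1}\mid Y_{1:0}}=b_{1|0}/a_{1|0}=1$; since $\sigma(Y_{1:0})$ is the trivial $\sigma$-field, this is already the unconditional identity $\EE[\Theta_1^{-1}]=1$.

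For the inductive step, I would assume $\EE[\Theta_t^{-1}]=1$. Substituting the recursion \eqref{eq.update22} into $b_{t+1|t}/a_{t+1|t}$ and using the representation \eqref{credibility Delta} reproduces exactly the thinning identity
\[
\E{\Theta_{t+1}^{-1}\mid Y_{1:t}}=\Delta_t\,\E{\Theta_t^{-1}\mid Y_{1:t}}+(1-\Delta_t),
\qquad \Delta_t=\frac{q_t}{p_t+q_t}.
\]
Taking unconditional expectations on both sides and invoking the tower property, the factor $\Delta_t$ --- a deterministic function of the exogenous sequences $(p_t)_{t\ge 1}$ and $(q_t)_{t\ge 1}$ --- comes out of the expectation, and $\EE\!\left[\E{\Theta_t^{-1}\mid Y_{1:t}}\right]=\EE[\Theta_t^{-1}]=1$ by the inductive hypothesis. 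Hence $\EE[\Theta_{t+1}^{-1}]=\Delta_t\cdot 1+(1-\Delta_t)=1$, closing the induction.

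The argument is short and I do not anticipate a genuine obstacle. The only points that need a little care are: (i) integrability of $\Theta_t^{-1}$ and $\Theta_{t+1}^{-1}$, i.e.\ that the Gamma shape parameters $1+a_t$ and $1+a_{t+1|t}$ exceed $1$, which is precisely the positivity already guaranteed by the well-definedness of Model \ref{model.2}; and (ii) the observation that $\Delta_t$ is non-random, so it may legitimately be pulled outside the expectation in the inductive step --- this is the place where the deterministic nature of the exogenous sequences, beyond the pointwise identity \eqref{thinning a} itself, is actually used.
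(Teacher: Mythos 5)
Your proof is correct and follows essentially the same route as the paper: you derive the thinning identity $\E{\Theta_{t+1}^{-1}\mid Y_{1:t}}=\Delta_t\E{\Theta_t^{-1}\mid Y_{1:t}}+(1-\Delta_t)$ directly from the recursion \eqref{eq.update22}, then close an induction via the tower property and the initialization $a_{1|0}=b_{1|0}$. The only (cosmetic) difference is that you take unconditional expectations in the inductive step, while the paper conditions on $Y_{1:t-1}$ before inducting; your added remarks on integrability and the deterministic nature of $\Delta_t$ are accurate but not a departure from the paper's argument.
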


\begin{proof}[Proof of Proposition \ref{lemma 1 mean stationarity}]
We compute
\begin{eqnarray}\nonumber
\EE \left[\left.\Theta^{-1}_{t+1}\right| Y_{1:t}\right]
&=& \frac{b_{t+1|t}}{a_{t+1|t}}
~=~ \frac{q_tb_t+p_ta_t}{(p_t+q_t)a_t}
~=~ \frac{q_t}{p_t+q_t}\frac{b_t}{a_t}
+ \frac{p_t}{p_t+q_t}
\\\label{conditional means 33}
&=&  \Delta_t\EE \left[\left.\Theta^{-1}_{t}\right| Y_{1:t}\right]
+ (1-\Delta_t).
\end{eqnarray}
Using the tower property of conditional
expectations, this implies
\begin{equation*}
\EE \left[\left.\Theta^{-1}_{t+1}\right| Y_{1:t-1}\right]
=
\EE \left[\left. \EE \left[\left.\Theta^{-1}_{t+1}\right| Y_{1:t}\right]
\right| Y_{1:t-1}\right]
= \Delta_t\EE \left[\left.\Theta^{-1}_{t}\right| Y_{1:t-1}\right]
+ (1-\Delta_t).
\end{equation*}
By induction we receive the claim using initialization $b_{1|0}=a_{1|0}>0$.
\end{proof}

\section{Variance behavior of the state-space process}\label{sec.4}

In Proposition \ref{lemma 1 mean stationarity}, we proved
that the inverse state-space process $(\Theta^{-1}_t)_{t\ge 1}$ is mean stationary under Model \ref{model.2}.
The goal of this section
is to analyze its long-term variance behavior
under different specifications of the exogenous sequences $(p_t)_{t \ge 1}\subseteq [0,1]$ and $(q_t)_{t \ge 1}\subseteq \R_+$.

We start with a general result, Lemma \ref{cor.2}, which is the basic tool to analyze the variance behaviors in the subsequent models and subsections:
in Section \ref{subsection stationary}, we give
parameter sequences that lead to a variance stationary inverse
state-space process; in Section \ref{subsection_increasing},
we provide an example with an increasing variance behavior that
can asymptotically explode (diverge); and in Section
\ref{subsection decreasing}, we provide an example that
has a decreasing variance behavior, in the extreme case
converging to zero. That is, asymptotically there is no
randomness coming from the state-space process.

%Moreover, in \ref{app.1}, we provide a detailed characterization of the asymptotic variance behavior for a homogeneous version of Model \ref{model.2}, i.e., setting constant parameter sequences
%\begin{equation}\label{homogeneous model} \mu_t \equiv \mu > 0, \qquad v_t \equiv 1, \qquad p_t \equiv p \in [0,1], \qquad \text{and} \qquad q_t \equiv q > 0. \end{equation}
%This analysis further demonstrates the full flexibility of the variance behavior of the state-space process under Model \ref{model.2}, allowing the model to adapt to various patterns observed in real data through appropriate parameter choices.

\subsection{A general result on the variance of the inverse state-space process}
We start by a general recursive formula for the variance
of the inverse state-space process.

\begin{lemma}\label{cor.2}
Consider the setting of Model \ref{model.2}, and assume that the following variances are finite
\[
\Var{\Theta^{-1}_{t}}<\infty, \qquad \text{ for all $t\ge 1$.}
\]
The variances of the inverse state-space process satisfy the following recursion
\begin{equation}\label{general variance formula}
\Var{\Theta^{-1}_{t+1}}
=\frac{q_t^2}{p_t+q_t}\,
\frac{a_t-1}{(p_t+q_t)a_t-1}
\Var{\Theta^{-1}_{t}}
+\left(1-\frac{q_t^2}{p_t+q_t}\right)\frac{1}{(p_t+q_t)a_t-1}.
\end{equation}
\end{lemma}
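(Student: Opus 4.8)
The plan is to combine the inverse-Gamma moment formulas with the law of total variance, always conditioning on $Y_{1:t}$. First I would record two structural facts. (i) Since the updates \eqref{eq.update22} together with the filtering recursion \eqref{eq41} make $(a_t)_{t\ge1}$ and $(a_{t+1|t})_{t\ge1}$ deterministic while only $b_t$ carries the observations, the conditional laws $\Theta_t|_{Y_{1:t}}\sim\Gamma(1+a_t,b_t)$ and $\Theta_{t+1}|_{Y_{1:t}}\sim\Gamma(1+a_{t+1|t},b_{t+1|t})$ are Gamma laws with \emph{deterministic} shape and random scale. Using $\E{X^{-1}}=\beta/(\alpha-1)$ and $\Var{X^{-1}}=\beta^2/\big((\alpha-1)^2(\alpha-2)\big)$ for $X\sim\Gamma(\alpha,\beta)$ — valid here because finiteness of $\Var{\Theta^{-1}_t}$ forces $a_t>1$ and $a_{t+1|t}=(p_t+q_t)a_t>1$ — one obtains the $\mathbb{P}$-a.s.\ pointwise identities
\[
\Var{\Theta^{-1}_t\,|\,Y_{1:t}}=\frac{\big(\E{\Theta^{-1}_t\,|\,Y_{1:t}}\big)^2}{a_t-1},
\qquad
\Var{\Theta^{-1}_{t+1}\,|\,Y_{1:t}}=\frac{\big(\E{\Theta^{-1}_{t+1}\,|\,Y_{1:t}}\big)^2}{(p_t+q_t)a_t-1}.
\]
(ii) From \eqref{conditional means 33}, $\E{\Theta^{-1}_{t+1}\,|\,Y_{1:t}}=\Delta_t\,\E{\Theta^{-1}_t\,|\,Y_{1:t}}+(1-\Delta_t)$ with $\Delta_t=q_t/(p_t+q_t)$, and by Proposition \ref{lemma 1 mean stationarity} the inverse state-space is mean stationary, so $\EE\big[\E{\Theta^{-1}_{t+1}\,|\,Y_{1:t}}\big]=\E{\Theta^{-1}_{t+1}}=1$.

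Next I would apply the law of total variance at level $Y_{1:t}$, writing $C_t:=\Var{\E{\Theta^{-1}_t\,|\,Y_{1:t}}}$. The mean-update in (ii) gives $\Var{\E{\Theta^{-1}_{t+1}\,|\,Y_{1:t}}}=\Delta_t^2 C_t$; taking expectations in the second identity of (i) and using $\EE\big[(\E{\Theta^{-1}_{t+1}\,|\,Y_{1:t}})^2\big]=\Delta_t^2 C_t+1$ (mean stationarity) yields $\EE\big[\Var{\Theta^{-1}_{t+1}\,|\,Y_{1:t}}\big]=(\Delta_t^2 C_t+1)/\big((p_t+q_t)a_t-1\big)$. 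Adding the two contributions,
\[
\Var{\Theta^{-1}_{t+1}}=\Delta_t^2 C_t\,\frac{(p_t+q_t)a_t}{(p_t+q_t)a_t-1}+\frac{1}{(p_t+q_t)a_t-1}.
\]

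It then remains to eliminate $C_t$ in favour of $\Var{\Theta^{-1}_t}$. Applying the law of total variance one step earlier, $\Var{\Theta^{-1}_t}=C_t+\EE\big[\Var{\Theta^{-1}_t\,|\,Y_{1:t}}\big]$, and the first identity of (i) together with mean stationarity give $\EE\big[\Var{\Theta^{-1}_t\,|\,Y_{1:t}}\big]=(C_t+1)/(a_t-1)$; solving gives $a_t C_t=(a_t-1)\Var{\Theta^{-1}_t}-1$. Substituting $C_t=\big((a_t-1)\Var{\Theta^{-1}_t}-1\big)/a_t$ and $\Delta_t=q_t/(p_t+q_t)$ into the previous display and simplifying produces exactly \eqref{general variance formula}. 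The algebra in this last step is routine; the points that genuinely require care are choosing the correct conditioning level ($Y_{1:t}$, not $Y_{1:t+1}$) in the two applications of the law of total variance, and exploiting the determinism of $(a_t)$ so that each conditional variance is a deterministic multiple of the squared conditional mean — this is precisely what makes the recursion close in $\Var{\Theta^{-1}_t}$ alone rather than requiring an auxiliary recursion for $C_t$. The finiteness hypothesis is invoked throughout to guarantee $a_t>1$, $(p_t+q_t)a_t>1$, and integrability of the relevant squares.
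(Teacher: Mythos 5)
Your proposal is correct and follows essentially the same route as the paper's proof: the law of total variance at conditioning level $Y_{1:t}$, the inverse-Gamma moment identities with deterministic shape (so that $\Var{\Theta^{-1}_t \,|\, Y_{1:t}}$ is a deterministic multiple of the squared conditional mean), the thinning identity \eqref{conditional means 33} together with mean stationarity, and elimination of $\Var{\EE[\Theta^{-1}_t\,|\,Y_{1:t}]}$ to close the recursion in $\Var{\Theta^{-1}_t}$ alone. Your explicit remark that the finiteness hypothesis forces $a_t>1$ and $(p_t+q_t)a_t>1$ is a small added precision not spelled out in the paper's proof.
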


\begin{proof}[Proof of Lemma \ref{cor.2}]
We start by proving the following three results
\begin{eqnarray}\label{result theorem 1.a}
\Var{\Theta^{-1}_{t+1}}
&=& \frac{a_{t+1|t}}{a_{t+1|t}-1}
\Var{\EE\left[\left.\Theta^{-1}_{t+1}\right| Y_{1:t}\right]
}
+\frac{1}{a_{t+1|t}-1},
\\\label{result theorem 1.b}
\Var{\Theta^{-1}_{t}}
&=& \frac{a_{t}}{a_{t}-1}
\Var{\EE\left[\left.\Theta^{-1}_{t}\right| Y_{1:t}\right]
}
+\frac{1}{a_{t}-1},
\\\label{result theorem 1.c}
\Var{\EE\left[\left.\Theta^{-1}_{t+1}\right| Y_{1:t}\right]
}
&=& \Delta_t^2\Var{\EE\left[\left.\Theta^{-1}_{t}\right| Y_{1:t}\right]
}.
\end{eqnarray}
We use the tower property and mean stationarity, see Proposition
\ref{lemma 1 mean stationarity}, to receive the first claim \eqref{result theorem 1.a} as follows
\begin{eqnarray*}
\Var{\Theta^{-1}_{t+1}}
&=&
\EE\left[\Var{\left.\Theta^{-1}_{t+1}\right| Y_{1:t}}\right]
+\Var{\EE\left[\left.\Theta^{-1}_{t+1}\right| Y_{1:t}\right]}
\\&=&
\EE\left[\frac{b^2_{t+1|t}}{a^2_{t+1|t}(a_{t+1|t}-1)}
\right]
+\Var{\EE\left[\left.\Theta^{-1}_{t+1}\right| Y_{1:t}\right]
}
\\&=& \frac{1}{a_{t+1|t}-1}
\EE\left[\EE\left[\left.\Theta^{-1}_{t+1}\right| Y_{1:t}\right]^2
\right]
+\Var{\EE\left[\left.\Theta^{-1}_{t+1}\right| Y_{1:t}\right]
}
\\&=& \frac{a_{t+1|t}}{a_{t+1|t}-1}
\Var{\EE\left[\left.\Theta^{-1}_{t+1}\right| Y_{1:t}\right]
}
+\frac{1}{a_{t+1|t}-1}.
\end{eqnarray*}
Similarly we have we have for the second claim \eqref{result theorem 1.b}
\begin{eqnarray*}
\Var{\Theta^{-1}_{t}}
&=&
\EE\left[\Var{\left.\Theta^{-1}_{t}\right| Y_{1:t}}\right]
+\Var{\EE\left[\left.\Theta^{-1}_{t}\right| Y_{1:t}\right]}
\\&=&
\EE\left[\frac{b^2_{t}}{a^2_{t}(a_{t}-1)}
\right]
+\Var{\EE\left[\left.\Theta^{-1}_{t}\right| Y_{1:t}\right]
}
\\&=& \frac{1}{a_{t}-1}
\EE\left[\EE\left[\left.\Theta^{-1}_{t}\right| Y_{1:t}\right]^2
\right]
+\Var{\EE\left[\left.\Theta^{-1}_{t}\right| Y_{1:t}\right]
}
\\&=& \frac{a_{t}}{a_{t}-1}
\Var{\EE\left[\left.\Theta^{-1}_{t}\right| Y_{1:t}\right]
}
+\frac{1}{a_{t}-1}.
\end{eqnarray*}
Finally, we have for the third claim \eqref{result theorem 1.c}
\begin{equation*}
\Var{\EE\left[\left.\Theta^{-1}_{t+1}\right| Y_{1:t}\right]
}
= \Var{\Delta_t\EE \left[\left.\Theta^{-1}_{t}\right| Y_{1:t}\right]
+ (1-\Delta_t)
}
= \Delta_t^2\Var{\EE\left[\left.\Theta^{-1}_{t}\right| Y_{1:t}\right]
}.
\end{equation*}
The lemma is now proved by merging
\eqref{result theorem 1.a}-\eqref{result theorem 1.c}
as follows
\begin{eqnarray*}
\Var{\Theta^{-1}_{t+1}}
&=& \frac{(p_t+q_t)a_t}{(p_t+q_t)a_t-1}
\Var{\EE\left[\left.\Theta^{-1}_{t+1}\right| Y_{1:t}\right]
}
+\frac{1}{(p_t+q_t)a_t-1}
\\
&=&
\Delta_t^2 \frac{(p_t+q_t)a_t}{(p_t+q_t)a_t-1}
\Var{\EE\left[\left.\Theta^{-1}_{t}\right| Y_{1:t}\right]
}
+\frac{1}{(p_t+q_t)a_t-1}
\\
&=&
\Delta_t^2 \frac{(p_t+q_t)a_t}{(p_t+q_t)a_t-1}
\left(
\frac{a_{t}-1}{a_{t}}
\Var{\Theta^{-1}_{t}}
-\frac{1}{a_{t}}
\right)
+\frac{1}{(p_t+q_t)a_t-1}
\\
&=&
\Delta_t^2 \frac{(p_t+q_t)(a_t-1)}{(p_t+q_t)a_t-1}
\Var{\Theta^{-1}_{t}}
+\frac{1-\Delta_t^2(p_t+q_t)}{(p_t+q_t)a_t-1}.
\\
&=&\frac{q_t^2}{p_t+q_t}\,
\frac{a_t-1}{(p_t+q_t)a_t-1}
\Var{\Theta^{-1}_{t}}
+\left(1-\frac{q_t^2}{p_t+q_t}\right)\frac{1}{(p_t+q_t)a_t-1}.
\end{eqnarray*}
This completes the proof.
\end{proof}

\subsection{Model with a stationary variance process}
\label{subsection stationary}

We start with the stationary variance process case under Model \ref{model.2}. Recall the weights
$\Delta_t=q_t/(p_t+q_t) \in (0,1]$. Moreover, the sequence $(a_t)_{t\ge 1}$
is deterministic, i.e., does not depend on the responses $(Y_t)_{t\ge 1}$, these only enter the
sequence $(b_t)_{t\ge 1}$.

\begin{lemma}\label{lem.4}
Consider the setting of Model \ref{model.2} with the further condition $a_{1|0} > 1$, and assume that the variances of the inverse state-space process are finite
\[
\Var{\Theta^{-1}_{t}} < \infty, \qquad \text{ for all $t \ge 1$.}
\]
For a constant variance process $(\V(\Theta^{-1}_t))_{t \ge 1}$, the sequences $(p_t)_{t \ge 1}$ and $(q_t)_{t \ge 1}$ need to satisfy  for all $t \ge 1$
\begin{equation}\label{eq.112}
q_t = \frac{\Delta_t a_{1|0}}{a_t - \Delta_t^2 a_t + \Delta_t^2 a_{1|0}}.
\end{equation}
\end{lemma}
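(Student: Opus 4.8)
The plan is to feed the general recursion \eqref{general variance formula} of Lemma \ref{cor.2} the initial value of the variance and then impose that the variance be constant in $t$; the identity \eqref{eq.112} will drop out of a short linear computation. First I would pin down the constant value. Since $\sigma(Y_{1:0})$ is trivial, the initialization \eqref{eq1} with $b_{1|0}=a_{1|0}$ says that $\Theta_1$ is (unconditionally) $\Gamma(1+a_{1|0},a_{1|0})$-distributed, so $\Theta_1^{-1}$ is inverse-Gamma with shape $1+a_{1|0}$. Using the first two moments of the reciprocal of a Gamma variable gives $\EE[\Theta_1^{-1}]=a_{1|0}/a_{1|0}=1$ and $\Var{\Theta_1^{-1}}=a_{1|0}^2/(a_{1|0}^2(a_{1|0}-1))=1/(a_{1|0}-1)$, which is finite precisely because $a_{1|0}>1$. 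Hence, if $(\V(\Theta^{-1}_t))_{t\ge1}$ is constant, then $\Var{\Theta^{-1}_t}=1/(a_{1|0}-1)=:V$ for all $t\ge1$.

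Next I would substitute $\Var{\Theta^{-1}_{t+1}}=\Var{\Theta^{-1}_t}=V$ into \eqref{general variance formula}. Writing $s_t:=p_t+q_t$ and using $\Delta_t=q_t/s_t$, so that $q_t^2/s_t=\Delta_t^2 s_t$, the recursion becomes
\[
V=\Delta_t^2 s_t\,\frac{a_t-1}{s_t a_t-1}\,V+\bigl(1-\Delta_t^2 s_t\bigr)\frac{1}{s_t a_t-1}.
\]
The finite-variance hypothesis forces $a_t>1$ and $s_t a_t-1=a_{t+1|t}-1>0$, so I may clear the denominator $s_t a_t-1$ without sign worries. Collecting the terms that carry $V$, the identity reduces to $V\,s_t a_t(1-\Delta_t^2)=(V+1)(1-\Delta_t^2 s_t)$. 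Dividing by $V$ and using $(V+1)/V=1+(a_{1|0}-1)=a_{1|0}$ gives $s_t a_t(1-\Delta_t^2)=a_{1|0}(1-\Delta_t^2 s_t)$, a linear equation in $s_t$ whose solution is $s_t=a_{1|0}/\bigl(a_t-\Delta_t^2 a_t+\Delta_t^2 a_{1|0}\bigr)$; multiplying by $\Delta_t$ and recalling $q_t=\Delta_t s_t$ yields exactly \eqref{eq.112}. Since the chain of equalities is reversible, the stated relation is in fact also sufficient, by induction on $t$ with base case $\Var{\Theta_1^{-1}}=V$.

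I do not expect a genuine obstacle here: the computation is elementary, and the only points requiring care are the bookkeeping of the $\Delta_t$-substitution and verifying that each denominator that appears — namely $a_t-1$, $s_t a_t-1$, and $a_t-\Delta_t^2 a_t+\Delta_t^2 a_{1|0}$ — is strictly positive under the finite-variance assumption together with $\Delta_t\in(0,1]$.
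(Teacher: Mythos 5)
Your proposal is correct and follows essentially the same route as the paper: plug the constant value $\Var{\Theta_1^{-1}}=1/(a_{1|0}-1)$ into the recursion of Lemma \ref{cor.2} and solve the resulting linear equation for $q_t$ (the paper parametrizes via $q_t\Delta_t$ and $q_ta_t/\Delta_t$, you via $s_t=p_t+q_t$, which is only a cosmetic difference). Your added remarks on positivity of the denominators and on sufficiency are fine but not needed for the stated (necessity-only) claim, whose converse the paper handles separately via Model \ref{model.4} and Theorem \ref{model.4 well defined}.
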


\begin{proof}[Proof of Lemma \ref{lem.4}]
Lemma \ref{cor.2} implies that for a constant variance process $( \V(\Theta^{-1}_t))_{t\ge 1}$ we need to have for all $t\ge 1$
the identity
\begin{eqnarray*}
\frac{1}{a_{1|0}-1}
&=&\frac{q_t^2}{p_t+q_t}\,
\frac{a_t-1}{(p_t+q_t)a_t-1}
\left(\frac{1}{a_{1|0}-1}\right)
+\left(1-\frac{q_t^2}{p_t+q_t}\right)\frac{1}{(p_t+q_t)a_t-1}
\\
&=&q_t\Delta_t\,
\frac{a_t-1}{q_ta_t/\Delta_t-1}
\left(\frac{1}{a_{1|0}-1}\right)
+\left(1-q_t\Delta_t\right)\frac{1}{q_t a_t/\Delta_t-1},
\end{eqnarray*}
this uses initialization $\V(\Theta^{-1}_1)=1/(a_{1|0}-1)$, $a_{1|0} > 1$.
The above identity is equivalent
to
\begin{equation*}
q_t a_t/\Delta_t-1
=q_t\Delta_t\,
(a_t-1)
+\left(1-q_t\Delta_t\right)(a_{1|0}-1)
=q_t\Delta_t\,
(a_t-a_{1|0})
+a_{1|0}-1 >0.
\end{equation*}
Solving this for $q_t$ gives the result.
\end{proof}

Lemma \ref{lem.4} is void unless there exists a Model \ref{model.2} where the corresponding variance process $(\V(\Theta^{-1}_t))_{t \ge 1}$ is constant over time $t \ge 1$. Motivated by Lemma \ref{lem.4}, we present the following model as an example of Model \ref{model.2} possessing the constant variance process property. The subsequent result demonstrates that this model exhibits the variance stationarity property.

\begin{model} \label{model.4}
With the additional exogenous sequence
\[
\left( \Delta_t\right)_{\ge 1}\subseteq (0,1],
\]
consider the setting in Model \ref{model.2} with further conditions $a_{1|0}>1$, and where the sequence $\left( q_t\right)_{\ge 1}\subseteq \Real_+$ is defined by \eqref{eq.112}.

\end{model}

For Model \ref{model.4} to be well-defined, we need to show $\left( q_t\right)_{\ge 1}\subseteq \Real_+$. %The following result, show that Model \ref{model.4} is well-defined.
For this, we show recursively $q_t>0$ and $a_t>0$,  supposed that $q_{t-1}>0$ and $a_{t-1}>0$.
The second sequence is initialized by $a_1\ge a_{1|0}>1$, and the first
one by
\begin{equation*}
q_1 = \frac{\Delta_1 a_{1|0}}{a_1(1-\Delta_1^2) +\Delta_1^2 a_{1|0}}>0.
\end{equation*}
The recursive step for $a_t$ follows from
\[
a_t\ge a_{t|t-1}=(p_{t-1}+q_{t-1})a_{t-1}=q_{t-1} \Delta_{t-1}^{-1}a_{t-1}>0,
\]
for $\Delta_{t-1} \in (0,1]$.
The recursive step
for $q_t$ is given by
\begin{equation*}
q_t = \frac{\Delta_t a_{1|0}}{a_t(1-\Delta_t^2) +\Delta_t^2 a_{1|0}}>0,
\end{equation*}
for $\Delta_{t} \in (0,1]$.
This proves that $(q_t)_{t\ge 1}\subseteq \R_+$ and
$(a_t)_{t\ge 1}\subseteq \R_+$.

\begin{theorem}\label{model.4 well defined}
Under the framework of Model \ref{model.4}, the inverse state-space process $(\Theta^{-1}_t)_{t \ge 1}$ is stationary in both its mean and variance.
\end{theorem}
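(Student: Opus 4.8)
The statement has two parts, and I would treat them separately. Mean stationarity, $\EE[\Theta_t^{-1}]=1$ for all $t\ge 1$, requires no new work: Model~\ref{model.4} is exactly the instance of Model~\ref{model.2} in which the sequence $(q_t)_{t\ge 1}$ is fixed by \eqref{eq.112}, so Proposition~\ref{lemma 1 mean stationarity} applies verbatim. For variance stationarity I would prove the sharper statement
\[
\Var{\Theta_t^{-1}}=\frac{1}{a_{1|0}-1}<\infty\qquad\text{for all }t\ge 1,
\]
by induction on $t$; constancy of the variance follows at once.

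Before the induction I would record the auxiliary fact that $a_t>1$ and $a_{t+1|t}>1$ for all $t\ge1$, which is what makes the inverse-Gamma second moments used below well-defined (the text only establishes $a_t,q_t>0$). Two observations suffice. First, \eqref{eq.112} gives $q_t\Delta_t=\Delta_t^2 a_{1|0}/\bigl(a_t(1-\Delta_t^2)+\Delta_t^2 a_{1|0}\bigr)$, and since $\Delta_t\le 1$ and $a_t>0$ the denominator exceeds the numerator by $a_t(1-\Delta_t^2)\ge 0$, so $q_t\Delta_t\in(0,1]$. Second, \eqref{eq.112} is equivalent to the identity
\[
a_{t+1|t}-1=\frac{q_ta_t}{\Delta_t}-1=q_t\Delta_t(a_t-1)+(1-q_t\Delta_t)(a_{1|0}-1),
\]
which is precisely the equation solved for $q_t$ in the proof of Lemma~\ref{lem.4}. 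Hence, if $a_t>1$ then the right-hand side is strictly positive, so $a_{t+1|t}>1$, and then $a_{t+1}=a_{t+1|t}+v_{t+1}/\psi\ge a_{t+1|t}>1$; the base case $a_1=a_{1|0}+v_1/\psi\ge a_{1|0}>1$ closes this auxiliary induction.

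For the main induction, the base case $t=1$ is a direct computation from \eqref{eq1}: $\Theta_1|_{Y_{1:0}}\sim\Gamma(1+a_{1|0},b_{1|0})$ with $b_{1|0}=a_{1|0}$, so $\Theta_1^{-1}$ is inverse-Gamma with $\EE[\Theta_1^{-1}]=b_{1|0}/a_{1|0}=1$ and $\Var{\Theta_1^{-1}}=b_{1|0}^2/\bigl(a_{1|0}^2(a_{1|0}-1)\bigr)=1/(a_{1|0}-1)$, finite because $a_{1|0}>1$. For the inductive step, assume $\Var{\Theta_t^{-1}}=1/(a_{1|0}-1)<\infty$. The three identities \eqref{result theorem 1.a}--\eqref{result theorem 1.c} derived inside the proof of Lemma~\ref{cor.2} use only mean stationarity, the bounds $a_t,a_{t+1|t}>1$, and finiteness of $\Var{\Theta_t^{-1}}$; chaining them — passing via \eqref{result theorem 1.b} from $\Var{\Theta_t^{-1}}$ to $\Var{\EE[\Theta_t^{-1}| Y_{1:t}]}$, via \eqref{result theorem 1.c} to $\Var{\EE[\Theta_{t+1}^{-1}| Y_{1:t}]}$, and via \eqref{result theorem 1.a} to $\Var{\Theta_{t+1}^{-1}}$ — simultaneously propagates finiteness to $t+1$ and reproduces the recursion \eqref{general variance formula}. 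Substituting $\Var{\Theta_t^{-1}}=1/(a_{1|0}-1)$ and the formula \eqref{eq.112} for $q_t$ into \eqref{general variance formula}, the computation is precisely the one in the proof of Lemma~\ref{lem.4} read in reverse (each step there being an equivalence), and it collapses to $\Var{\Theta_{t+1}^{-1}}=1/(a_{1|0}-1)$, completing the induction.

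The one genuinely delicate point is the finiteness bookkeeping: Lemma~\ref{cor.2} is stated under the blanket hypothesis that all $\Var{\Theta_t^{-1}}$ are finite, so it cannot be invoked as a black box while finiteness is itself still being established. The remedy, as above, is to note that its proof consumes only one-step-back finiteness together with the already-proven bounds $a_t,a_{t+1|t}>1$, so finiteness and the value $1/(a_{1|0}-1)$ can be carried together along a single induction. Everything else — the inverse-Gamma moment formulas and the reversal of the algebra in the proof of Lemma~\ref{lem.4} — is routine.
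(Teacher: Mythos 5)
Your proposal is correct and follows essentially the same route as the paper: mean stationarity is delegated to Proposition \ref{lemma 1 mean stationarity}, the key step is showing $a_t>1$ and $a_{t|t-1}>1$ so that the variances are finite, and constancy of the variance then comes from the recursion of Lemma \ref{cor.2} combined with the (reversible) algebra behind \eqref{eq.112} in Lemma \ref{lem.4}. The only differences are matters of explicitness: you prove $a_{t+1|t}>1$ via the identity $a_{t+1|t}-1=q_t\Delta_t(a_t-1)+(1-q_t\Delta_t)(a_{1|0}-1)$ rather than the paper's direct computation of $a_{2|1}$, and you spell out the induction carrying the value $1/(a_{1|0}-1)$ together with finiteness, which the paper leaves implicit after establishing $a_t, a_{t|t-1}>1$.
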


\begin{proof}[Proof of Theorem \ref{model.4 well defined}]
By Proposition \ref{lemma 1 mean stationarity}, we have the
mean stationarity of $(\Theta_t^{-1})_{t\ge 1}$.
There remains to prove that this model
has a finite variance process. For this we require that
$a_t > 1$ and $a_{t|t-1}>1$ for all $t\ge 1$. The former
follows from the latter because $a_t\ge a_{t|t-1}$. So there remains to
prove the latter. We initialize
$a_{1|0} >1$. This implies $a_1\ge a_{1|0} >1$. From this
we can see that for $t=2$
\begin{equation*}
a_{2|1}=q_{1} \Delta_{1}^{-1}a_{1}
=\frac{a_{1|0}}{a_{1}-\Delta_{1}^2 a_{1} +\Delta_{1}^2 a_{1|0}} a_{1}
=\frac{a_{1|0}}{a_{1}-\Delta_{1}^2 (a_{1}-a_{1|0})} a_{1}
>a_{1|0}>1.
\end{equation*}
This implies $a_2\ge a_{2|1}>  a_{1|0}> 1$. Recursive iteration for $t\ge 3$
proves the claim.
\end{proof}

\subsection{Model with an increasing variance process}
\label{subsection_increasing}

Proposition \ref{theo.10} proves that the Smith--Miller Model exhibits an increasing variance behavior in the inverse state-space dynamics.
Consequently, Model \ref{model.2} is capable of accommodating such a behavior, as the Smith--Miller Model is a special case of Model \ref{model.2}.
That is, the Smith--Miller Model corresponds to Model \ref{model.2} under the following parameter settings, for $t\ge 1$,
\[
q_t = \frac{\gamma\left(a_t - 1\right) + 1}{a_t}
\qquad \text{and} \qquad
p_t = 0.
\]
We extent this result by showing that appropriate exogenous sequences in Model \ref{model.2}, the variance not only increases but, even diverges.

\begin{theorem}\label{corollary unbounded}
Consider Model \ref{model.2} with further conditions $p_t=0$ and $v_t=1$ for all $t\ge 1$,
and dispersion parameter $\psi=1$. Moreover, assume $0<q_t<\delta <1$ for all $t\ge 1$. Then, we have
\begin{equation}\label{eq.a3}
\liminf_{t \to \infty}\Var{\Theta^{-1}_{t}}=\infty.
\end{equation}
\end{theorem}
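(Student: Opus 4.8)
The plan is to specialize the variance recursion of Lemma~\ref{cor.2} to $p_t\equiv 0$, $v_t\equiv 1$, $\psi=1$, after first disposing of the degenerate possibility that some $\V(\Theta_t^{-1})$ is infinite. First I would record the deterministic structure: with $p_t=0$, $v_t=1$, $\psi=1$, the updates \eqref{eq41} and \eqref{eq.update22} give $a_{t+1}=a_{t+1|t}+1=q_ta_t+1$, so $(a_t)_{t\ge 1}$ is a deterministic sequence with $a_1=a_{1|0}+1>1$; since $0<q_t<\delta<1$, a one-step induction yields $1<a_t\le M:=\max\{a_{1|0}+1,(1-\delta)^{-1}\}<\infty$ for all $t$ (if $a_t\le(1-\delta)^{-1}$ then $a_{t+1}<\delta(1-\delta)^{-1}+1=(1-\delta)^{-1}$, and if $a_t>(1-\delta)^{-1}$ then $a_{t+1}<\delta a_t+(1-\delta)a_t=a_t$). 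Also $a_{t+1|t}=q_ta_t=a_{t+1}-1$, and $\Delta_t=q_t/(p_t+q_t)=1$, so \eqref{conditional means 33} collapses to the martingale identity $\E{\Theta_{t+1}^{-1}\mid Y_{1:t}}=\E{\Theta_{t}^{-1}\mid Y_{1:t}}$.

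Case 1: $\V(\Theta_{t_0}^{-1})=\infty$ for some $t_0\ge 1$. Then the variance stays infinite, so \eqref{eq.a3} is immediate. Indeed, identity \eqref{result theorem 1.b} is just the conditional variance formula combined with mean stationarity, hence valid in $[0,\infty]$, and since $a_t\in(1,M]$ it shows $\V(\Theta_t^{-1})=\infty$ iff $\V\big(\E{\Theta_t^{-1}\mid Y_{1:t}}\big)=\infty$; the martingale identity then gives $\V\big(\E{\Theta_{t+1}^{-1}\mid Y_{1:t}}\big)=\V\big(\E{\Theta_t^{-1}\mid Y_{1:t}}\big)\le\V(\Theta_{t+1}^{-1})$, so the infinite value propagates forward. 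This case in particular covers $a_{1|0}\le 1$ (then $\Theta_1^{-1}$, conditionally on the empty past, is inverse-Gamma with shape $1+a_{1|0}\le 2$) and the occurrence of $a_{t_0}\le 2$ for some $t_0\ge 2$ (then $\Theta_{t_0}^{-1}\mid Y_{1:t_0-1}$ is inverse-Gamma with shape $1+a_{t_0|t_0-1}=a_{t_0}\le 2$).

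Case 2: $\V(\Theta_t^{-1})<\infty$ for all $t\ge 1$. By Case~1 this forces $a_{1|0}>1$ and $a_t>2$ for all $t\ge 2$, hence $M>2$; moreover Lemma~\ref{cor.2} now applies, and with $p_t=0$ it reads
\[
\V(\Theta_{t+1}^{-1})=\frac{q_t(a_t-1)}{q_ta_t-1}\,\V(\Theta_t^{-1})+\frac{1-q_t}{q_ta_t-1},\qquad\text{with } q_ta_t-1=a_{t+1}-2\in(0,M-2].
\]
Because $q_t<1$ we have $q_t(a_t-1)=q_ta_t-q_t>q_ta_t-1>0$, so the multiplicative coefficient exceeds $1$; since $\V(\Theta_t^{-1})\ge 0$ this yields
\[
\V(\Theta_{t+1}^{-1})\ \ge\ \V(\Theta_t^{-1})+\frac{1-q_t}{a_{t+1}-2}\ \ge\ \V(\Theta_t^{-1})+\frac{1-\delta}{M-2}.
\]
Iterating, $\V(\Theta_t^{-1})\ge\V(\Theta_1^{-1})+(t-1)\frac{1-\delta}{M-2}\to\infty$, so $\lim_{t\to\infty}\V(\Theta_t^{-1})=\infty$, which is even stronger than \eqref{eq.a3}.

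The step I expect to be the main obstacle is Case~1: because the theorem does not postulate finite variances, one must pin down exactly when an infinite variance can arise (the governing inverse-Gamma shape dropping to $\le 2$, which can genuinely happen here) and argue that it cannot be repaired at later times — this is precisely where the $\Delta_t=1$ martingale structure of the conditional means is used. Case~2 is then a short manipulation of Lemma~\ref{cor.2}; the only point needing care there is the uniform bound $a_t\le M$, which keeps the additive increment $\frac{1-q_t}{a_{t+1}-2}$ bounded away from $0$.
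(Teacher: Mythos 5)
Your proposal is correct, and its core is the same as the paper's: specialize Lemma \ref{cor.2} to $p_t=0$, $v_t=1$, $\psi=1$, note that $(a_t)_{t\ge 1}$ is deterministic and uniformly bounded because $q_t<\delta<1$, and conclude that the additive term $(1-q_t)/(q_ta_t-1)$ in the variance recursion is bounded away from zero while the multiplicative coefficient is at least one, so the variances grow at least linearly. Where you genuinely add something is your Case~1. The paper disposes of possibly infinite variances with a one-line remark keyed to the condition $a_t>1$, which is imprecise in this setting: under $v_t=1$, $\psi=1$ one has $a_t=a_{t|t-1}+1>1$ automatically, the relevant threshold is $a_t>2$ (equivalently $a_{t|t-1}>1$), and an infinite variance at one time does not by itself give $\liminf_{t\to\infty}\Var{\Theta^{-1}_t}=\infty$. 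Your propagation argument --- the law of total variance read in $[0,\infty]$ (legitimate because mean stationarity makes the only subtracted quantity, $\EE[\Theta_t^{-1}]^2=1$, finite, and $a_t$ is deterministic) combined with the $\Delta_t=1$ identity $\EE[\Theta_{t+1}^{-1}\mid Y_{1:t}]=\EE[\Theta_t^{-1}\mid Y_{1:t}]$ --- supplies exactly the missing step, and in the finite-variance case you even obtain the stronger conclusion that $\Var{\Theta^{-1}_t}\to\infty$. So the route is the paper's, but your write-up of the degenerate case is the more careful and complete one.
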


\begin{proof}[Proof of Theorem \ref{corollary unbounded}.]
Assume that there exists $N>0$ such that $a_t>1$ for all $t\ge N$,
otherwise \eqref{eq.a3} holds because the variance can only be
finite if $a_t >1$. That is, in this case we have $\Var{\Theta_t^{-1}}<\infty$ for all $t\ge N$.
Then, we come back to Lemma \ref{cor.2}.
Setting $p_t=0$, $t\ge 1$, we have from \eqref{general variance formula}
\begin{equation}\label{first and second terms}
\Var{\Theta^{-1}_{t+1}}
=
\frac{q_ta_t-q_t}{q_t a_t-1}\,
\Var{\Theta^{-1}_{t}}
+\left(1-q_t\right)\frac{1}{q_t a_t-1},
\end{equation}
for all $t\ge N$.
Consider the sequence $(a_t)_{t\ge 1}$
\begin{equation*}
a_{t+1}=a_{t+1|t}+v_{t+1}/\psi=q_{t}a_{t}+1
\le \delta a_{t} +1
\le \delta^t a_{1|0} + \sum_{k=0}^{t} \delta ^k
\le a_{1|0} + \frac{\delta}{1-\delta} < \infty,
\end{equation*}
for all $t \ge N$. This implies that the second term on the right-hand side of \eqref{first and second terms} is bounded uniformly from below
by a strictly positive constant. Moreover, the first ratio on the right-hand
side of \eqref{first and second terms} is bounded from below by 1, therefore,
\eqref{first and second terms} implies that a sequence $\left(\Var{\Theta^{-1}_{t}}\right)_{t\ge 1}$ is diverging for $t\to \infty$.
This completes the proof.
\end{proof}

\subsection{Model with a decreasing variance process}
\label{subsection decreasing}
Next, we present a setting of Model \ref{model.2}
that has decreasing variances of the inverse state-space process.
In the extreme case, these variances converge to zero, see
Corollary \ref{corollary limit variance zero}, below.
For this, we consider Model \ref{model.2} under the choice
\begin{equation}\label{eq.4}
p_t+q_t=1 \qquad \hbox{for all}\quad t\ge 1,
\end{equation}
with $(p_t)_{t\ge 1}\subset[0,1)$.

\begin{theorem}\label{theorem decreasing}
Consider Model \ref{model.2} with initialization $a_{1|0}>1$ and
with constraint \eqref{eq.4}. Then,
  \[
  \Var{\Theta^{-1}_{t+1}} \le \Var{\Theta^{-1}_{t}},
  \qquad \text{ for $t\ge 1$,}
  \]
  where the equality holds if and only if $p_t=0$.
\end{theorem}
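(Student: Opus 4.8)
The plan is to specialize the general variance recursion of Lemma~\ref{cor.2} to the constraint \eqref{eq.4}, and then to compare the resulting expression with the conditional variance decomposition \eqref{result theorem 1.b} established in the proof of that lemma.

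\textbf{Step 1 (finiteness of the variances).} Under \eqref{eq.4} the recursion \eqref{eq.update22} gives $a_{t+1|t}=(p_t+q_t)a_t=a_t$, so by the filtering update \eqref{eq41} we get $a_{t+1}=a_t+v_{t+1}/\psi\ge a_t$; iterating and using $a_1\ge a_{1|0}>1$ yields $a_t\ge a_{1|0}>1$ for all $t\ge 1$. In particular the inverse state-spaces $\Theta_t^{-1}$ (conditionally inverse-Gamma with shape $1+a_t>2$) have finite variance for every $t$, so Lemma~\ref{cor.2} applies.

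\textbf{Step 2 (the simplified recursion and the inequality).} Substituting $p_t+q_t=1$ into \eqref{general variance formula} collapses the factor $\tfrac{a_t-1}{(p_t+q_t)a_t-1}$ to $1$ and leaves
\[
\Var{\Theta^{-1}_{t+1}}=q_t^2\,\Var{\Theta^{-1}_{t}}+\bigl(1-q_t^2\bigr)\frac{1}{a_t-1}.
\]
By \eqref{result theorem 1.b}, $\Var{\Theta^{-1}_{t}}=\tfrac{a_t}{a_t-1}\Var{\EE[\Theta^{-1}_{t}\mid Y_{1:t}]}+\tfrac{1}{a_t-1}$, so with $c_t:=\Var{\Theta^{-1}_{t}}-\tfrac{1}{a_t-1}=\tfrac{a_t}{a_t-1}\Var{\EE[\Theta^{-1}_{t}\mid Y_{1:t}]}\ge 0$ we obtain
\[
\Var{\Theta^{-1}_{t+1}}-\Var{\Theta^{-1}_{t}}=-\bigl(1-q_t^2\bigr)\,c_t .
\]
Since $p_t\in[0,1)$ forces $q_t=1-p_t\in(0,1]$ and hence $1-q_t^2\in[0,1)$, the right-hand side is $\le 0$, which is the claimed monotonicity.

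\textbf{Step 3 (the equality case).} From the last display, equality holds iff $(1-q_t^2)\,c_t=0$, i.e. iff $p_t=0$ (equivalently $q_t=1$) or $c_t=0$. If $p_t=0$ the recursion trivially yields equality. It remains to exclude $c_t=0$ when $p_t>0$, i.e. to show $\Var{\EE[\Theta^{-1}_{t}\mid Y_{1:t}]}>0$. Since $\EE[\Theta^{-1}_{t}\mid Y_{1:t}]=b_t/a_t$ with $a_t$ deterministic, this amounts to $\Var{b_t}>0$; by \eqref{eq41} and \eqref{eq.update22}, $b_t$ is an affine function of $b_{t-1}$ with strictly positive slope $q_{t-1}$ plus the term $Y_t/(\mu_t\psi)$, and $b_1=b_{1|0}+Y_1/(\mu_1\psi)$ is non-degenerate once the response process is non-degenerate (e.g.\ some $v_t\ge 1$), so a short induction conditioning on $Y_{1:t-1}$ propagates $\Var{b_t}>0$ to all $t$. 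Combining the two cases gives the ``if and only if'' statement.

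I expect Step~3 to be the main obstacle: the inequality itself drops straight out of Lemma~\ref{cor.2} together with \eqref{result theorem 1.b}, but certifying that strictness fails \emph{only} at $p_t=0$ requires the (mild, but not entirely automatic) fact that the filtered mean $b_t/a_t$ genuinely fluctuates with the data rather than being almost surely constant; this is where the implicit non-degeneracy of the responses enters.
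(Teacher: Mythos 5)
Your proof is correct, but it follows a genuinely different route from the paper. The paper does not invoke Lemma~\ref{cor.2} for this theorem at all: it compares the two components of the law-of-total-variance decomposition separately, showing $\Var{\EE[\Theta^{-1}_{t+1}\mid Y_{1:t}]}\le \Var{\EE[\Theta^{-1}_{t}\mid Y_{1:t}]}$ via the thinning identity \eqref{conditional means 33} (with equality iff $\Delta_t=1$, i.e.\ $p_t=0$), and $\EE[\Var{\Theta^{-1}_{t+1}\mid Y_{1:t}}]\le \EE[\Var{\Theta^{-1}_{t}\mid Y_{1:t}}]$ by expanding $(p_ta_t+q_tb_t)^2$, using mean stationarity in the form $\EE[b_t]=a_t$, and applying Jensen's inequality. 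You instead specialize the already-established recursion \eqref{general variance formula} to $p_t+q_t=1$ and combine it with \eqref{result theorem 1.b}, which yields the exact decrement $\Var{\Theta^{-1}_{t+1}}-\Var{\Theta^{-1}_{t}}=-(1-q_t^2)\,\frac{a_t}{a_t-1}\Var{\EE[\Theta^{-1}_{t}\mid Y_{1:t}]}$. This is shorter, quantitative (it tells you exactly how much the variance drops), and it is the same specialization the paper later uses in Corollary~\ref{corollary limit variance zero}; the paper's term-by-term argument, by contrast, is self-contained and does not require the finite-variance hypothesis of Lemma~\ref{cor.2} to be checked up front. Your Step 1 is the right way to discharge that hypothesis, though strictly speaking unconditional finiteness of $\Var{\Theta^{-1}_t}$ also needs $\EE[b_t^2]<\infty$, which follows by a short induction using $a_{t|t-1}>1$; this is at the same level of rigor as the paper.

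On the equality case, the caveat you raise in Step 3 is real but it is not a defect of your approach relative to the paper: the paper's strictness claim rests on Jensen's inequality $\EE[b_t]^2<\EE[b_t^2]$ being strict, which likewise fails if $b_t$ is almost surely constant (e.g.\ if $v_1=\cdots=v_t=0$, so that $Y_{1:t}=0$ a.s.). So the ``only if'' direction implicitly assumes non-degenerate responses in both proofs; you make this explicit and sketch the induction propagating $\Var{b_t}>0$, which is a useful clarification rather than a gap.
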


\begin{proof}[Proof of Theorem \ref{theorem decreasing}.]
First we note that $\Delta_t \in (0,1]$ is deterministic, and $a_{t}, a_{t|t-1}>0$ for all $t\ge 1$.
Formula \eqref{conditional means 33} then implies
\[
\Var{\EE[\Theta^{-1}_{t+1} \,|\, Y_{1:t}]}=\Delta_t^2\,\Var{\EE[\Theta^{-1}_{t} \,|\, Y_{1:t}]}
\le \Var{\EE[\Theta^{-1}_{t} \,|\, Y_{1:t}]},
\]
and we have an equality if and only if $\Delta_t=1$ which is equivalent
to $p_t=0$.
Next, observe
\begin{eqnarray*}
    \EE\left[\Var{\Theta^{-1}_{t+1} \,|\, Y_{1:t}}\right]
    &=&\EE\left[\frac{b_{t|t-1}^2}{a_{t|t-1}^2(a_{t|t-1}-1)}\right]
    ~=~\EE\left[\frac{(p_ta_t+q_tb_t)^2}{((p_t+q_t)a_t)^2((p_t+q_t)a_t-1)}\right]\\
    &=&\frac{\EE\left[\left((1-q_t)a_t+q_tb_t\right)^2\right]}{a_t^2(a_t-1)}
    ~\le~\frac{\EE\left[b_t^2\right]}{a_t^2(a_t-1)}
    ~=~
        \EE\left[\Var{\Theta^{-1}_{t} \,|\, Y_{1:t}}\right].
  \end{eqnarray*}
The inequality follows from mean stationarity
\begin{equation*}
a_t= a_t \EE\left[\Theta_t^{-1}\right]
= a_t \EE\left[\EE\left[\left.\Theta_t^{-1}\right| Y_{1:t}\right]\right]
= a_t \EE\left[\frac{b_t}{a_t}\right] = \EE[b_t],
\end{equation*}
which together with Jensen's inequality implies
\begin{eqnarray*}
\EE\left[\left((1-q_t)a_t+q_tb_t\right)^2\right]
&=&(1-q_t)^2a_t^2 + 2q_t(1-q_t)a^2_t+
\EE\left[q^2_tb_t^2\right]
\\
&=&(1-q_t)^2\EE[b_t]^2 + 2q_t(1-q_t)\EE[b_t]^2+
\EE\left[q^2_tb_t^2\right]
\\
&\le&(1-q_t)^2\EE\left[b_t^2\right] + 2q_t(1-q_t)\EE\left[b_t^2\right]^2+
\EE\left[q^2_tb_t^2\right]
\\
&=&
\EE\left[b_t^2\right],
\end{eqnarray*}
and we have a strict inequality whenever $q_t<1$, which is equivalent to
$p_t>0$ under $p_t+q_t=1$.
This proves the claim.
\end{proof}

In the next result we prove that if, under the assumptions of
Theorem \ref{theorem decreasing}, the sequence $(q_t)_{t\ge 1}
\subseteq (0,1)$ is bounded away from zero and one, the inverse state-space
process has an asymptotically vanishing variance, i.e., asymptotically the randomness from the state-spaces is zero.

\begin{corollary}\label{corollary limit variance zero}
Consider Model \ref{model.2} with initialization $a_{1|0}>1$ and
with constraint \eqref{eq.4}. Moreover, assume $\delta <q_t^2
<1-\delta$ for some $\delta>0$ and all $t \ge 1$, and
$\sum_{t\ge 1}v_t=\infty$.
Then, we have
  \[
\lim_{t \to \infty}  \Var{\Theta^{-1}_{t}}=0.
  \]
\end{corollary}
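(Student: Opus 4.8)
The plan is to apply the general variance recursion in Lemma \ref{cor.2}, specialized to the constraint \eqref{eq.4}, and then extract from it a contraction estimate on $\Var{\Theta^{-1}_t}$. First I would note that under $p_t+q_t=1$ the recursion \eqref{general variance formula} simplifies considerably: the factor $(p_t+q_t)a_t-1$ becomes $a_t-1$, and with $\Delta_t=q_t$ the recursion reads
\[
\Var{\Theta^{-1}_{t+1}}
= q_t^2\,\frac{a_t-1}{a_t-1}\,\Var{\Theta^{-1}_{t}}
\cdot\frac{1}{\,?\,}
\]
— more precisely, after inserting $p_t+q_t=1$, one gets
\[
\Var{\Theta^{-1}_{t+1}}
= q_t^2\,\Var{\Theta^{-1}_{t}}
+ (1-q_t^2)\,\frac{1}{a_t-1}.
\]
Theorem \ref{theorem decreasing} already guarantees this sequence is nonincreasing and bounded below by $0$, hence it converges to some limit $L\ge 0$; the task is to show $L=0$.

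The key observation is that the ``noise'' term $(1-q_t^2)/(a_t-1)$ must vanish in the limit for the recursion to be consistent with convergence to $L$. Taking $\limsup$ on both sides and using $q_t^2<1-\delta$, I would derive an inequality of the form $L\le (1-\delta)L + \limsup_t (1-q_t^2)/(a_t-1)$, so that $\delta L\le \limsup_t (1-q_t^2)/(a_t-1)\le \limsup_t 1/(a_t-1)$. Thus it suffices to show $a_t\to\infty$. This is where the hypothesis $\sum_{t\ge1}v_t=\infty$ enters: from the filtering update \eqref{eq41}, $a_t=a_{t|t-1}+v_t/\psi$, and from \eqref{eq.update22} together with $p_t+q_t=1$ we get $a_{t+1|t}=(p_t+q_t)a_t=a_t$, so $a_{t+1}=a_t+v_{t+1}/\psi$. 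Iterating gives $a_t=a_{1|0}+\psi^{-1}\sum_{k=2}^{t}v_k\to\infty$ by the divergence assumption. Hence $1/(a_t-1)\to 0$, forcing $\delta L\le 0$, and since $L\ge0$ and $\delta>0$ we conclude $L=0$.

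I would organize the write-up in three short steps: (i) establish $a_t\to\infty$ from $\sum v_t=\infty$ and the fact that $p_t+q_t=1$ makes $a_{t+1|t}=a_t$ (so $a_t$ grows additively by $v_t/\psi$); (ii) specialize \eqref{general variance formula} under \eqref{eq.4} to the clean recursion $\Var{\Theta^{-1}_{t+1}}=q_t^2\Var{\Theta^{-1}_{t}}+(1-q_t^2)/(a_t-1)$, noting finiteness holds since $a_t\ge a_{1|0}>1$ by Theorem \ref{theorem decreasing}'s proof; (iii) combine monotone convergence of the nonincreasing sequence with the $\limsup$ argument using $q_t^2<1-\delta$ and $a_t\to\infty$ to pin the limit at $0$.

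The main obstacle, such as it is, lies in step (iii): one must be careful that $\limsup$ of the recursion is handled correctly, since $q_t$ is not assumed to converge. The cleanest route is to fix $\varepsilon>0$, choose $T$ so that $1/(a_t-1)<\varepsilon$ and $\Var{\Theta^{-1}_t}<L+\varepsilon$ for $t\ge T$, and then show $\Var{\Theta^{-1}_{t+1}}\le (1-\delta)(L+\varepsilon)+\varepsilon$, whence $L\le(1-\delta)(L+\varepsilon)+\varepsilon$ for every $\varepsilon>0$, giving $\delta L\le 0$. A secondary point worth a sentence is confirming that the finite-variance hypothesis of Lemma \ref{cor.2} is met here — this is immediate because the proof of Theorem \ref{theorem decreasing} shows $a_t\ge a_{1|0}>1$ for all $t$, so each $\Var{\Theta^{-1}_t}$ is finite and the recursion is legitimately applicable.
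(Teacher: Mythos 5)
Your proposal is correct and follows essentially the same route as the paper: both specialize Lemma \ref{cor.2} under $p_t+q_t=1$ to the recursion $\Var{\Theta^{-1}_{t+1}}=q_t^2\,\Var{\Theta^{-1}_{t}}+(1-q_t^2)/(a_t-1)$, and both obtain $a_t=a_{1|0}+\psi^{-1}\sum_{s=1}^t v_s\to\infty$ from $a_{t+1|t}=a_t$ and $\sum_t v_t=\infty$ (your writing the sum from $k=2$ is a harmless indexing slip, since $v_1<\infty$). The only difference is the finishing step: you use the monotone limit $L$ guaranteed by Theorem \ref{theorem decreasing} and pass to the limit in a single step of the recursion to force $\delta L\le 0$, whereas the paper unrolls the recursion into the explicit bound $(1-\delta)^{s}/(a_{1|0}-1)+\frac{1-\delta}{\delta}\,(a_t-1)^{-1}$ and runs a double $\varepsilon$-argument — both are valid, with the paper's version additionally giving a nonasymptotic rate.
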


\begin{proof}[Proof of Corollary \ref{corollary limit variance zero}.]
From Lemma \ref{cor.2} and under choice \eqref{eq.4} we have
\begin{equation*}
\Var{\Theta^{-1}_{t+1}}
= q_t^2\,
\Var{\Theta^{-1}_{t}}
+\left(1-q_t^2\right)\frac{1}{a_t-1}
\le  (1-\delta)\,
\Var{\Theta^{-1}_{t}}
+\left(1-\delta\right)\frac{1}{a_t-1},
\end{equation*}
for $t\ge 1$. Choose $t+s>t \ge 1$, then we have
using the monotonicity of Theorem \ref{theorem decreasing}
\begin{eqnarray*}
\Var{\Theta^{-1}_{t+s}}
&\le&  (1-\delta)^{s}\,
\Var{\Theta^{-1}_{t}}
+ \sum_{k=1}^s \left(1-\delta\right)^k\frac{1}{a_{t+s-k}-1}
\\
&\le&  (1-\delta)^{s}\,
\Var{\Theta^{-1}_{1}}
+ \left(\frac{1-\delta}{\delta}\right) \max_{1\le k\le s } \frac{1}{a_{t+s-k}-1}.
\end{eqnarray*}
Using \eqref{eq41} and \eqref{eq.update22} we have for $t\ge 2$
\begin{equation*}
a_t=a_{t|t-1}+v_t/\psi=(p_ {t-1}+q_{t-1})a_{t-1}+v_t/\psi=a_{t-1}+v_t/\psi
=a_{1|0} + \psi^{-1}\sum_{s=1}^t v_s.
\end{equation*}
Under our assumptions, this implies monotonicity of $(a_t)_{t \ge 1}$
with $\lim_{t \to \infty} a_t=\infty$. Plugging this into the previous inequality
we obtain upper bound
\begin{eqnarray*}
\Var{\Theta^{-1}_{t+s}}
&\le& (1-\delta)^{s}\,
\frac{1}{a_{1|0}-1}
+ \left(\frac{1-\delta}{\delta}\right) \frac{1}{a_{t}-1}.
\end{eqnarray*}
Using $\lim_{t \to \infty} a_t=\infty$, we can find for any $\varepsilon >0$
an index $t \ge 1$ such that the second term is
\begin{equation*}
\left(\frac{1-\delta}{\delta}\right) \frac{1}{a_{t}-1} < \varepsilon /2.
\end{equation*}
Moreover, there exists $s_0>0$ such that for all $s\ge s_0$
\begin{equation*}
(1-\delta)^{s}\,
\frac{1}{a_{1|0}-1} < \varepsilon /2.
\end{equation*}
But this implies that for all $u\ge s_0+t$ we have
\begin{equation*}
\Var{\Theta^{-1}_{u}}<\varepsilon.
\end{equation*}
Since $\varepsilon>0$ was arbitrary, the claim follows.
\end{proof}

\begin{remark}\normalfont
For $p_t \equiv 0$ and $q_t \equiv 1$, there is no state-space update from $t$ to $t+1$ meaning that
\[
\left.\Theta_{t}\right\vert_{Y_{1:t}} \eqd \left.\Theta_{t+1}\right\vert_{Y_{1:t}}.
\]
In this case, Model \ref{model.2} becomes a static-random effects model, which complies with  the \cite{BuhlmannStraub} assumptions, in particular, it has one static latent factor $\Theta_t \equiv \Theta \sim \Gamma(a_{1|0}, a_{1|0})$  \cite[page 44]{buhlmann2006course}.
\end{remark}

From the previous results, we conclude that Model \ref{model.2} allows for a wide range of variance behaviors for the inverse state-space dynamics $(\Theta_t^{-1})_{t \ge 1}$. The specific variance behavior should be determined by the data through model fitting and selection. This is discussed next.

%\section{Forecasting and evolutionary credibility}
\section{Model Fitting, forecasting, and evolutionary credibility}
\label{sec: Forecasting and evolutionary credibility}
\subsection{Evolutionary credibility and credibility formulas}
For experience rating, one is mainly interested in forecasting
the claim in the next period $Y_{t+1}$, given past information $Y_{1:t}$.
Under Model \ref{model.2}, we can compute the conditional expectation of the future claim,
given the past observations, by applying the tower property for conditional expectations. That is,
\begin{equation*}
\EE\left[\left. Y_{t+1} \right| Y_{1:t} \right]
=
\EE\left[\left.\EE\left[\left. Y_{t+1} \right| Y_{1:t}, \Theta_{1:t+1} \right]\right| Y_{1:t} \right]
= v_{t+1}\mu_{t+1}\,
\EE\left[\left.\Theta_{t+1}^{-1}\right| Y_{1:t} \right].
\end{equation*}
Using \eqref{thinning a} and mean stationarity we arrive at
\begin{equation}\label{credibility}
\EE\left[\left. Y_{t+1} \right| Y_{1:t} \right]
=v_{t+1}\mu_{t+1} \Big( \Delta_t \E{\Theta_{t}^{-1}\,|\, Y_{1:t}} +\left( 1-\Delta_t\right) \EE[\Theta_{t+1}^{-1}]\Big).
\end{equation}
Thus, the posterior prediction is a credibility weighted average between
the observation-driven estimate
$\E{\Theta_{t}^{-1}\,|\, Y_{1:t}}$ and the prior estimate
$\EE[\Theta_{t+1}^{-1}]=\EE[\Theta_{1}^{-1}]=1$, with credibility
weight, see \eqref{credibility Delta},
\begin{equation*}
\Delta_t = \frac{q_t}{q_t + p_t} ~\in~(0,1].
\end{equation*}
This recursive formula \eqref{credibility}, along with Bayesian filtering in \eqref{eq41}, provides a basis for deriving the evolutionary credibility
structure \cite[Chapter 9]{buhlmann2006course}. Specifically, starting from \eqref{eq.3400}, we obtain posterior mean
\begin{eqnarray*}
\E{\Theta_{t}^{-1}\,|\, Y_{1:t}} ~=~ \frac{b_t}{a_t}
&=& \frac{b_{t|t-1}+Y_t/(\mu_t\psi)}{a_{t|t-1}+v_t/\psi}
\\
&=& \begin{cases}\frac{v_t/\psi}{a_{t|t-1}+v_t/\psi}
\frac{1}{\mu_t}\frac{Y_t}{v_t}+
\frac{a_{t|t-1}}{a_{t|t-1}+v_t/\psi}
\frac{b_{t|t-1}}{a_{t|t-1}}, & \text{ for $v_t>0$;}\\
\frac{b_{t|t-1}}{a_{t|t-1}}, & \text{ for $v_t=0$.}\\
\end{cases}
\end{eqnarray*}
To simplify notation, for $Y_t = 0$ and $v_t = 0$, we define
\[
\frac{Y_t}{v_t} = 0,
\]
almost surely, which implies
\[
\frac{\E{\left. Y_t \right| Y_{1:t-1}}}{v_t} = 0.
\]
With this definition, the recursion can be compactly expressed as
\[
\E{\Theta_{t}^{-1}\,|\, Y_{1:t}} = \frac{v_t/\psi}{a_{t|t-1}+v_t/\psi}
\frac{1}{\mu_t}\frac{Y_t}{v_t} +
\frac{a_{t|t-1}}{a_{t|t-1}+v_t/\psi}\,
\E{\Theta_{t}^{-1}\,|\, Y_{1:t-1}}.
\]
Combining this with \eqref{stillgamma0} and \eqref{credibility}, we obtain the evolutionary credibility formula  for the normalized observations
$({Y_t}/({v_t}\mu_t))_{t \geq 1}$.

\begin{proposition}
Assume Model \ref{model.2} holds. Define the weights $z_t$ by
\[
z_t = \frac{v_t/\psi}{a_{t|t-1} + v_t/\psi}~ \in ~[0,1).
\]
For $t \ge 1$, the predictive conditional mean of the future claim is given by
\begin{eqnarray*}
\EE\left[\left. Y_{t+1} \right| Y_{1:t} \right]
&=& v_{t+1}\mu_{t+1}\,
\EE\left[\left.\Theta_{t+1}^{-1}\right| Y_{1:t} \right] \\
&=& v_{t+1}\mu_{t+1} \Big( \omega_{1,t}
\frac{1}{\mu_{t}}\frac{Y_t}{v_t} +
\omega_{2,t}
\frac{1}{\mu_t}\frac{\EE\left[\left.Y_t\right| Y_{1:t-1} \right]}{v_t} +
\omega_{3,t} \Big),
\end{eqnarray*}
where the credibility weights $\omega_{1,t}, \omega_{2,t}, \omega_{3,t} \geq 0$ satisfy $\omega_{1,t} + \omega_{2,t} + \omega_{3,t} = 1$
and are given by
\[
\omega_{1,t} = \Delta_t z_t, \quad \omega_{2,t} = \Delta_t (1 - z_t), \quad \text{and} \quad
\omega_{3,t} = 1 - \Delta_t.
\]
\end{proposition}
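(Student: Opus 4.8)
The plan is to chain together three identities already available in the text and then check the elementary arithmetic of the weights; there is no substantial obstacle here, only some bookkeeping. I would begin with the top equality, which is exactly the display preceding the statement: conditioning on $(Y_{1:t},\Theta_{1:t+1})$ and using the tower property together with the conditional mean $\EE[Y_{t+1}\mid Y_{1:t},\Theta_{1:t+1}]=v_{t+1}\mu_{t+1}/\Theta_{t+1}$ from \eqref{Bayesian filtering Gamma} gives $\EE[Y_{t+1}\mid Y_{1:t}]=v_{t+1}\mu_{t+1}\,\EE[\Theta_{t+1}^{-1}\mid Y_{1:t}]$ (both sides vanish when $v_{t+1}=0$ by \eqref{equation2}). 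It therefore remains to expand $\EE[\Theta_{t+1}^{-1}\mid Y_{1:t}]$.

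Next I would peel off one credibility step: by \eqref{credibility} --- equivalently by the thinning identity \eqref{thinning a} together with the mean stationarity $\EE[\Theta_{t+1}^{-1}]=1$ established in Proposition \ref{lemma 1 mean stationarity} --- one has $\EE[\Theta_{t+1}^{-1}\mid Y_{1:t}]=\Delta_t\,\E{\Theta_t^{-1}\mid Y_{1:t}}+(1-\Delta_t)$. Then I would insert the Bayesian filtering update \eqref{eq41}: $\E{\Theta_t^{-1}\mid Y_{1:t}}=b_t/a_t=\big(b_{t|t-1}+Y_t/(\mu_t\psi)\big)\big/\big(a_{t|t-1}+v_t/\psi\big)$, and rewriting this as a convex combination with weight $z_t=(v_t/\psi)/(a_{t|t-1}+v_t/\psi)$ yields $\E{\Theta_t^{-1}\mid Y_{1:t}}=z_t\,\tfrac{1}{\mu_t}\tfrac{Y_t}{v_t}+(1-z_t)\,\tfrac{b_{t|t-1}}{a_{t|t-1}}$. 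Here $b_{t|t-1}/a_{t|t-1}=\E{\Theta_t^{-1}\mid Y_{1:t-1}}$ by \eqref{stillgamma0}, and the same tower-property computation as in the first step, applied at time $t$, gives $\E{\Theta_t^{-1}\mid Y_{1:t-1}}=\tfrac{1}{\mu_t}\,\EE[Y_t\mid Y_{1:t-1}]/v_t$. Substituting this chain back and grouping the three resulting terms produces the claimed representation with $\omega_{1,t}=\Delta_t z_t$, $\omega_{2,t}=\Delta_t(1-z_t)$ and $\omega_{3,t}=1-\Delta_t$; the degenerate case $v_t=0$ is absorbed by the conventions $Y_t/v_t=0$ and $\EE[Y_t\mid Y_{1:t-1}]/v_t=0$, for which $z_t=0$.

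Finally I would record the two asserted properties of the weights: $\Delta_t\in(0,1]$ by \eqref{credibility Delta}, and $z_t\in[0,1)$ since $a_{t|t-1}>0$ under Model \ref{model.2} while $v_t/\psi\ge 0$; hence $\omega_{1,t},\omega_{2,t},\omega_{3,t}\ge 0$, and $\omega_{1,t}+\omega_{2,t}+\omega_{3,t}=\Delta_t z_t+\Delta_t(1-z_t)+(1-\Delta_t)=1$. The only points requiring a little care are keeping the normalization $1/\mu_t$ attached to the correct observation term throughout the substitution and handling the $v_t=0$ convention consistently; neither is a real difficulty.
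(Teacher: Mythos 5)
Your argument is correct and follows the paper's own derivation step for step: the tower property for the outer identity, the thinning/credibility relation \eqref{credibility} obtained from \eqref{thinning a} and Proposition \ref{lemma 1 mean stationarity}, the Bayesian filtering update \eqref{eq41} rewritten as a convex combination with weight $z_t$, and the identification $b_{t|t-1}/a_{t|t-1}=\EE\left[\left.\Theta_t^{-1}\right| Y_{1:t-1}\right]=\frac{1}{\mu_t v_t}\,\EE\left[\left. Y_t\right| Y_{1:t-1}\right]$ for $v_t>0$. The only difference is cosmetic: you make the weight arithmetic and the $v_t=0$ conventions explicit, which the paper leaves implicit in the text preceding the proposition.
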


By recursively expanding the previous result we arrive at the following
corollary.

\begin{corollary}\label{explicit credibility}
Under Model \ref{model.2}, the predictive conditional mean can be represented by
\begin{equation*}
\begin{aligned}
\EE\left[\left. Y_{t+1} \right| Y_{1:t} \right]
&=     v_{t+1}\mu_{t+1} \left( \sum_{s=1}^{t} \left[\prod_{k=s+1}^{t} \omega_{2,k}\right]\left(\omega_{1,s}\,\left[\frac{1}{\mu_s}\frac{Y_s}{v_s}\right]
+ \omega_{3,s} \cdot 1\right) +\prod_{s=1}^t \omega_{2,s}\right),\\
\end{aligned}
\end{equation*}
for $t\ge 0$, where an empty product is set equal to 1 and an empty sum equal to zero.
\end{corollary}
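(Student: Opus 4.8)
The plan is to iterate the one-step credibility recursion established in the Proposition immediately preceding the corollary. Introduce the shorthand $G_t := \EE[\Theta_t^{-1}\mid Y_{1:t-1}]$, which by \eqref{stillgamma0} equals $b_{t|t-1}/a_{t|t-1}$, and $R_s := \frac{1}{\mu_s}\frac{Y_s}{v_s}$, with the paper's convention $R_s = 0$ when $v_s = 0$. Since $\EE[Y_{t+1}\mid Y_{1:t}] = v_{t+1}\mu_{t+1}\,G_{t+1}$, and for $v_t > 0$ one has $\frac{1}{\mu_t}\frac{\EE[Y_t\mid Y_{1:t-1}]}{v_t} = G_t$, the statement of that Proposition is precisely the scalar affine recursion
\[
G_{t+1} = \omega_{1,t}\,R_t + \omega_{2,t}\,G_t + \omega_{3,t}, \qquad t\ge 1,
\]
with the weights $\omega_{1,t},\omega_{2,t},\omega_{3,t}$ defined there. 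Hence the corollary is just the closed-form solution of this recursion, and the whole argument reduces to unrolling it with the correct initial value.

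First I would fix $G_1$. Because $Y_{1:0}$ generates the trivial $\sigma$-field, $G_1 = \EE[\Theta_1^{-1}]$, and by \eqref{eq1} with $a_{1|0} = b_{1|0}$ (equivalently, by mean stationarity, Proposition \ref{lemma 1 mean stationarity}) this equals $1$. This also disposes of the $t=0$ case of the claim: its right-hand side there is an empty sum plus the empty product $\prod_{s=1}^{0}\omega_{2,s}=1$, so it reads $\EE[Y_1\mid Y_{1:0}] = v_1\mu_1 = v_1\mu_1 G_1$, which is correct.

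Next I would prove by induction on $t$ that
\[
G_t = \sum_{s=1}^{t-1}\Big[\prod_{k=s+1}^{t-1}\omega_{2,k}\Big]\big(\omega_{1,s}R_s+\omega_{3,s}\big) + \prod_{s=1}^{t-1}\omega_{2,s}.
\]
The case $t=1$ is the previous paragraph (both the sum and the product on the right are empty). For the inductive step, substitute the hypothesis into $G_{t+1} = \omega_{1,t}R_t + \omega_{3,t} + \omega_{2,t}G_t$, distribute $\omega_{2,t}$ through the sum and the product so that each inner product $\prod_{k=s+1}^{t-1}\omega_{2,k}$ becomes $\prod_{k=s+1}^{t}\omega_{2,k}$, and absorb the free term $\omega_{1,t}R_t + \omega_{3,t}$ as the $s=t$ summand, whose weight is the empty product $\prod_{k=t+1}^{t}\omega_{2,k}=1$. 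This yields the displayed formula with $t$ replaced by $t+1$. Multiplying through by $v_{t+1}\mu_{t+1}$ and using $\EE[Y_{t+1}\mid Y_{1:t}] = v_{t+1}\mu_{t+1}G_{t+1}$ gives exactly the asserted identity for all $t\ge 0$ (the $t=0$ case having been checked above).

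The steps are essentially bookkeeping and I anticipate no genuine obstacle; the only things requiring attention are the empty-sum / empty-product conventions at the two endpoints and the degenerate exposure case $v_t = 0$. In that case $z_t = 0$, hence $\omega_{1,t} = \Delta_t z_t = 0$, so the $s=t$ summand reduces to $\omega_{3,t}\cdot 1$ and the recursion becomes $G_{t+1} = \omega_{2,t}G_t + \omega_{3,t}$, which is exactly \eqref{thinning a} (with $\omega_{2,t}=\Delta_t$, $\omega_{3,t}=1-\Delta_t$); thus the convention $Y_t/v_t = 0$ is harmless here.
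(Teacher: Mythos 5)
Your proposal is correct and follows the same route the paper intends: the paper obtains the corollary simply "by recursively expanding" the one-step credibility formula of the preceding proposition, which is exactly your induction on the affine recursion $G_{t+1}=\omega_{1,t}R_t+\omega_{2,t}G_t+\omega_{3,t}$ with $G_1=\EE[\Theta_1^{-1}]=1$. Your extra care with the $t=0$ endpoint, the empty sum/product conventions, and the $v_t=0$ degenerate case is consistent with (indeed slightly more explicit than) the paper's treatment.
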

By setting
$$\Delta_0=0\quad\hbox{and}\quad\frac{1}{\mu_0}\frac{Y_0}{v_0}=0,$$
the predictive conditional mean in Corollary \ref{explicit credibility} can be simplified as
\begin{equation*}
\begin{aligned}
\EE\left[\left. Y_{t+1} \right| Y_{1:t} \right]
&=
v_{t+1}\mu_{t+1} \left( \sum_{s=0}^{t} \left[\prod_{k=s+1}^{t} \omega_{2,k}\right]\left(\omega_{1,s}\,\left[\frac{1}{\mu_s}\frac{Y_s}{v_s}\right]
+ \omega_{3,s} \cdot 1\right)\right).\\
\end{aligned}
\end{equation*}
Thus, we obtain a credibility weighted average between
the observations $Y_{1:t}$ and the prior mean $\EE[\Theta_{t+1}^{-1}]=1$, and depending on the specific choices of the exogenous parameters this may lead to an exponentially decaying seniority weighting
in past claims; for seniority weighting see
\cite{pinquet2001allowance}.

\subsection{Likelihood function}

The joint density of the observations $Y_{1:t}$ under
Model \ref{model.2}
is given by the recursive formula
\begin{equation*}
    f(y_{1:t}) = f(y_t \mid y_{1:t-1}) f(y_{1:t-1}) = \prod_{s=1}^t f(y_s \mid y_{1:s-1}),
\end{equation*}
where $f(y_s \mid y_{1:s-1})$ for $2 \leq s \leq t$ represent the conditional density of $Y_s$ in $y_s$, for given $Y_{1:s-1}=y_{1:s-1}$, and $f(y_1) = f(y_1 \mid y_{1:0})$ represents the density of $Y_1$ in $y_1$. We have the following result.

\begin{lemma}\label{conditional marginals}
Under Model \ref{model.2}, for $s\ge 1$, we have the following holds.
\begin{itemize}
  \item[i.] The conditional density of $Y_s$ in $y_s>0$, given observations $Y_{1:s-1}=y_{1:s-1}$, is given by
\begin{equation*}
f(y_s\,|\,y_{1:s-1})
=\frac{\Gamma({v_s}/{\psi}+a_{s|s-1}+1)}{
\Gamma({v_s}/{\psi})\Gamma(a_{s|s-1}+1)}
\left(\frac{\frac{y_s}{\mu_s \psi}}
{\frac{y_s}{\mu_s\psi}+b_{s|s-1}}
\right)^{{v_s}/{\psi}}
\left(\frac{b_{s|s-1}}{\frac{y_s}{\mu_s\psi}+b_{s|s-1}}\right)^{a_{s|s-1}+1}
y_s^{-1}.
\end{equation*}
 \item[ii.] Define the random variable
\[
X_s =\frac{Y_s}{\mu_s \psi b_s}.
\]
The conditional density of $X_s$ in $x_s>0$, given observations $Y_{1:s-1}=y_{1:s-1}$, is given by
\[
f(x_s\,|\,y_{1:s-1})
=\frac{\Gamma\left({v_s}/{\psi}+ a_s+1 \right)}{\Gamma\left( {v_s}/{\psi} \right) \Gamma\left( a_s+1 \right)}
x_s^{{v_s}/{\psi}-1}(1+x_s)^{{v_s}/{\psi} +a_s+1}.
\]
This is a Beta-prime distribution with parameters ${v_s}/{\psi}>0$ and $a_s+1>0$, also called as Pearson type VI distribution \citep{johnson1995continuous}.
\end{itemize}
\end{lemma}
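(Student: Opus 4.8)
The plan is to compute the two conditional densities by a direct change-of-variables argument, using the Gamma-Gamma conjugacy already established in the excerpt. First I would fix $s\ge 1$ and condition on $Y_{1:s-1}=y_{1:s-1}$; under this conditioning the predictive parameters $a_{s|s-1}$ and $b_{s|s-1}$ are fixed (deterministic given the past, by the recursions \eqref{eq41} and \eqref{eq.update22}), and \eqref{stillgamma0} gives $\Theta_s\vert_{Y_{1:s-1}}\sim\Gamma(1+a_{s|s-1},b_{s|s-1})$, while \eqref{equation2} gives $Y_s\vert_{\Theta_s}\sim\Gamma(v_s/\psi,\Theta_s/(\mu_s\psi))$ (the case $v_s=0$ being degenerate and handled separately). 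To get part i., I would write the marginal density of $Y_s$ as the integral $f(y_s\mid y_{1:s-1})=\int_0^\infty f(y_s\mid\theta)\,f(\theta\mid y_{1:s-1})\,d\theta$, plug in both Gamma densities, collect all powers of $\theta$, and recognize the $\theta$-integral as an (unnormalized) Gamma integral over $\Gamma(v_s/\psi+a_{s|s-1}+1,\,y_s/(\mu_s\psi)+b_{s|s-1})$. Evaluating that integral produces the ratio of Gamma functions and the two power terms in $y_s/(\mu_s\psi)+b_{s|s-1}$; regrouping to isolate $y_s^{-1}$ and the two bracketed fractions gives exactly the stated formula.

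For part ii., I would use the filtering update: by \eqref{eq41}, conditionally on $Y_{1:s-1}$ and on the value $Y_s=y_s$, the posterior is $\Theta_s\vert_{Y_{1:s}}\sim\Gamma(1+a_s,b_s)$ with $a_s=a_{s|s-1}+v_s/\psi$ and $b_s=b_{s|s-1}+y_s/(\mu_s\psi)$. The cleanest route is to rewrite the part-i density in terms of $a_s$ and $b_s$: note $v_s/\psi+a_{s|s-1}+1=a_s+1$ (using $\Gamma(a_{s|s-1}+1)$ versus $\Gamma(a_s+1)$ requires care — actually one keeps $a_{s|s-1}+1$ and $v_s/\psi$ separately in the Gamma-function ratio, and only the argument of the top Gamma simplifies to $v_s/\psi+a_s+1$ once one substitutes), and $y_s/(\mu_s\psi)+b_{s|s-1}=b_s$, so that $\tfrac{y_s/(\mu_s\psi)}{y_s/(\mu_s\psi)+b_{s|s-1}}=\tfrac{y_s/(\mu_s\psi)}{b_s}=x_s$ and $\tfrac{b_{s|s-1}}{y_s/(\mu_s\psi)+b_{s|s-1}}=1-x_s$. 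Then I would apply the transformation $x_s=y_s/(\mu_s\psi b_s)$; since $b_s=b_{s|s-1}+y_s/(\mu_s\psi)$ depends on $y_s$, the Jacobian $dy_s/dx_s$ is not simply a constant, and this is the one genuinely delicate computation. Carrying it through, $y_s=\mu_s\psi b_{s|s-1}x_s/(1-x_s)$ on $x_s\in(0,1)$ — wait, one must check the range: since $b_s>b_{s|s-1}$ one has $x_s<1$ only if... in fact $x_s=(y_s/(\mu_s\psi))/(b_{s|s-1}+y_s/(\mu_s\psi))\in(0,1)$, so the support is $(0,1)$, not $(0,\infty)$; here I would double-check against the stated Beta-prime form, which lives on $(0,\infty)$, suggesting the intended variable is instead $x_s$ with a different normalization or that the density as written should be read with its stated support — I would reconcile this by recomputing the Jacobian carefully and confirming which of $X_s$ or $X_s/(1-X_s)$ yields the displayed density.

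The main obstacle I expect is precisely this Jacobian bookkeeping in part ii.: because the normalizing quantity $b_s$ in the definition of $X_s$ itself contains $Y_s$, the map $y_s\mapsto x_s$ is a Möbius-type transformation rather than a scaling, so $\left|\frac{dy_s}{dx_s}\right|$ contributes an extra factor of $(1-x_s)^{-2}$ (or similar), and one must track how this factor combines with the power terms $(1+x_s)$-type expressions to land on the clean Beta-prime density. Everything else — the conditional independence structure, the Gamma-integral evaluations, the algebraic regrouping — is routine given the conjugacy already proven around \eqref{eq41}. I would also briefly note the $v_s=0$ edge case, where $Y_s=0$ a.s. and the "density" statements are understood on $\{v_s>0\}$, so no separate argument is needed there beyond the convention already adopted in the paper.
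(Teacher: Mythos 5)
Your part i is correct and is exactly the paper's argument: write $f(y_s\,|\,y_{1:s-1})=\int_0^\infty f(y_s\,|\,\theta)\,f(\theta\,|\,y_{1:s-1})\,d\theta$, insert the Gamma densities from \eqref{equation2} and \eqref{stillgamma0}, evaluate the $\theta$-integral as the normalizing constant of a $\Gamma\left(v_s/\psi+a_{s|s-1}+1,\;y_s/(\mu_s\psi)+b_{s|s-1}\right)$ density, and regroup; nothing more is needed there.

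For part ii the paper's own proof consists of the single sentence that one performs a change of variables for $X_s=Y_s/(\mu_s\psi b_s)$, so you are not missing any idea the paper supplies, and the mismatch you flag is real rather than a confusion on your side. Set $W=Y_s/(\mu_s\psi)$, $c=b_{s|s-1}$, $\alpha=v_s/\psi$, $\beta=a_{s|s-1}+1$, so that part i gives $f_W(w)=\frac{\Gamma(\alpha+\beta)}{\Gamma(\alpha)\Gamma(\beta)}\left(\frac{w}{w+c}\right)^{\alpha}\left(\frac{c}{w+c}\right)^{\beta}w^{-1}$. Normalizing by the deterministic predictive scale, $X=W/c$, the Jacobian is the constant $c$ and one obtains $f_X(x)=\frac{\Gamma(\alpha+\beta)}{\Gamma(\alpha)\Gamma(\beta)}\,x^{\alpha-1}(1+x)^{-(\alpha+\beta)}$ on $(0,\infty)$, i.e.\ a Beta-prime with parameters $v_s/\psi$ and $a_{s|s-1}+1$; this shows the lemma's display carries a sign typo in the exponent of $(1+x_s)$, and its parameters should be read with the predictive quantities ($a_{s|s-1}$, $b_{s|s-1}$) rather than the filtered ones. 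Keeping instead the literal definition with the random $b_s=b_{s|s-1}+Y_s/(\mu_s\psi)$ gives, as you observed, $X_s=W/(c+W)\in(0,1)$; the M\"obius Jacobian $c/(1-x)^2$ combines with $w^{-1}=(1-x)/(cx)$ to yield $f_X(x)=\frac{\Gamma(\alpha+\beta)}{\Gamma(\alpha)\Gamma(\beta)}\,x^{\alpha-1}(1-x)^{\beta-1}$, a Beta law on $(0,1)$, and the Beta-prime variable is exactly $X_s/(1-X_s)=W/c$, as you suspected. The only genuine shortfall in your write-up is that you stop at saying you would reconcile the discrepancy instead of carrying out this short computation, which settles it and identifies the intended (corrected) form of the statement.
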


\begin{proof}[Proof of Lemma \ref{conditional marginals}.]
Using \eqref{equation2} and \eqref{stillgamma0}, we have for $s\ge 1$
and $y_s>0$
\begin{eqnarray*}
f(y_s\,|\,y_{1:s-1})&=& \int_{\theta_s} f(y_s\,|\,\theta_s)
f(\theta_s\,|\,y_{1:s-1})\, d\theta_s
\\
&=& \frac{\left(\frac{1}{\mu_s \psi}\right)^{\frac{v_s}{\psi}}}{\Gamma\left(\frac{v_s}{\psi}\right)}y_s^{\frac{v_s}{\psi}-1}
\frac{\left(b_{s|s-1}\right)^{a_{s|s-1}+1}}{\Gamma(a_{s|s-1}+1)}
\int_{\theta_s} \theta_s^{\frac{v_s}{\psi}+a_{s|s-1}+1-1}\, \exp\left\{ -\theta_s \left( \frac{y_s}{\mu_s\psi}+ b_{s|s-1}\right)\right\} d\theta_s
\\
&=&
\frac{\left(\frac{1}{\mu_s \psi}\right)^{\frac{v_s}{\psi}}}{\Gamma\left(\frac{v_s}{\psi}\right)}y_s^{\frac{v_s}{\psi}-1}
\frac{\left(b_{s|s-1}\right)^{a_{s|s-1}+1}}{\Gamma(a_{s|s-1}+1)}
\frac{\Gamma(\frac{v_s}{\psi}+a_{s|s-1}+1)}{
 \left(\frac{y_s}{\mu_s\psi}+b_{s|s-1}\right)^{
\frac{v_s}{\psi}+a_{s|s-1}+1}}
\\
&=&\frac{\Gamma(\frac{v_s}{\psi}+a_{s|s-1}+1)}{
\Gamma(\frac{v_s}{\psi})\Gamma(a_{s|s-1}+1)}
\left(\frac{\frac{y_s}{\mu_s \psi}}
{\frac{y_s}{\mu_s\psi}+b_{s|s-1}}
\right)^{\frac{v_s}{\psi}}
\left(\frac{b_{s|s-1}}{\frac{y_s}{\mu_s\psi}+b_{s|s-1}}\right)^{a_{s|s-1}+1}
y_s^{-1}.
\end{eqnarray*}
This completes the proof of part i.

The proof of the second part by a change of variables
for $X_s ={Y_s}/({\mu_s \psi b_s})$.
\end{proof}

Note that
\begin{equation*}
\lim_{y_s \to \infty} \frac{f(y_s\,|\,y_{1:s-1})}{y_s^{-(a_{s|s-1}+2)}}
%= c
%\end{equation*}
%where $c>0$ is defined as
%\begin{equation*}
%c=
~=~
\frac{\Gamma({v_s}/{\psi}+a_{s|s-1}+1)}{
\Gamma({v_s}/{\psi})\Gamma(a_{s|s-1}+1)}
\left(
\mu_s\psi b_{s|s-1}
\right)^{a_{s|s-1}+1} ~\in ~(0,\infty).
\end{equation*}
This implies that the conditional distribution of \(Y_s\), given \(Y_{1:s-1}\), is regularly varying at infinity with tail index \(-(a_{s|s-1}+2)\). Applying Karamata's theorem \citep{embrechts2013modelling} to the corresponding survival function shows that the conditional survival function of \(Y_s\), given \(Y_{1:s-1}\), is regularly varying at infinity with tail index \(-(a_{s|s-1}+1)\).
These asymptotic properties provide conditions for the finiteness of the conditional moments. In particular, if \(a_{s|s-1} > 0\), then \(\E{Y_s \mid Y_{1:s-1}}\) is finite, and if \(a_{s|s-1} > 1\), then \(\Var{Y_s \mid Y_{1:s-1}}\) is finite, which confirms the earlier conclusions regarding the existence of the conditional variance of
the inverse state-space \(\Theta_s^{-1}\).

In case of a unit exposure $v_t\equiv 1$ and a unit dispersion $\psi=1$,
the Gamma distribution \eqref{equation2} turns into the simpler exponential
distribution. As a consequence we receive the conditional distribution
\begin{equation*}
f(y_s\,|\,y_{1:s-1})=
\frac{a_{s|s-1}+1}{\mu_s b_{s|s-1}}
 \left(1+\frac{y_s}{\mu_sb_{s|s-1}}\right)^{-(a_{s|s-1}+2)}.
\end{equation*}
This is a Pareto Type II distribution, also called \cite{lomax1954business}
distribution, with a shape parameter $a_{s|s-1}+1>1$ and scale
parameter $\mu_sb_{s|s-1}>0$.

\section{Simulation study}
\label{section simulation study}
For our simulation study, we use the following data generation schemes. We assume there are $M$ distinct instances (insurance policies), that are observed for $T+1$ years. The records from first $T$ years are used for the model fitting, and the records of year $T+1$ are used for an out-of-sample validation. The exposure at time $1\le t \le T+1$ of instance $1\le i\le M$, $v_{i,t}$, follows the following distribution
$$
v_{i,t} = N_{i,t} + B_{i,t} ~\in ~\N_0,
$$
where $N_{i,t} \sim {\rm Poisson}(0.2\cdot (t+1))$, $B_{i,t} \sim {\rm Bernoulli}(1.2-0.2t)$, and $N_{i,t}$ and $B_{i,t}$ being independent
for all $t$ and $i$. The claim severity at time $t$ for instance $i$, $\mu_{i,t}$, is assumed to be known and selected by $\mu_{i,t} \sim {\rm Uniform}(2000, 4000)$. This gives us the exogenous sequences
$(v_{i,t}, \mu_{i,t})_{t=1}^{T+1}$ for all instances $1\le i \le M$.
Next, $(Y_{i,t}, \theta_{i,t})_{t=1}^{T+1}$ is generated under Model \ref{model.2}, and assuming \eqref{eq.112}, which implies a variance stationary process. In the construction, we further assume that $p_{i,t} = \kappa q_{i,t}$ for a constant $\kappa \ge 0$, which is equivalent to $\Delta_{i,t} \equiv \Delta$. Note that by construction, $Y_{i,t}=0$ if and only if $v_{i,t}=0$.

We generated 100 samples under the above data generation scheme with $M=5000$ and $T=5$, to estimate $a_{1|0}, \psi$, and $\Delta$. and subsequently predict $Y_{6}$ from $Y_{1:5}$, $v_{1:6}$, and $\mu_{1:6}$. For comparison purpose, we use the following six models:

\begin{itemize}
  \item Homogeneous independent: We assume that $\left. Y_{i,t} \right|_{Y_{i,1:t-1}}$ follows a Gamma distribution as in \eqref{equation2}, with $\mu_{i,t}=\mu$ and $\theta_{i,t}=1$ for $1 \le t\le  6$ and $1 \le i\le  5000$, where $\mu$ is estimated as the empirical weighted average of $(Y_{i,t})_{i,t}$ so that
    $$
    \widehat{\mu} = \frac{\sum_{i=1}^{M} \sum_{t=1}^T Y_{i,t}}{\sum_{i=1}^{M} \sum_{t=1}^T v_{i,t}}.
    $$
    \item Homogeneous B{\"u}hlmann model: We assume that $\left. Y_{i,t} \right|_{Y_{i,1:t-1}}$ follows a Gamma distribution as in \eqref{equation2}, $\mu_{i,t}=\mu$ and $\theta_{i,t}=\theta_{i}$ for $1 \le t\le  6$ and $1 \le i\le  5000$. Note that it is a special case of the next model (homogeneous SSM) with $\Delta=1$.
    \item Homogeneous state-space model (SSM): We assume that $\left. Y_{i,t} \right|_{Y_{i,1:t-1}}$ follows a Gamma distribution as in \eqref{equation2}, $\mu_{i,t}=\mu$ but allow for an evolution of $\theta_{i,1:6}$ under Model \ref{model.2}, assuming \eqref{eq.112} holds.
     \item Heterogeneous independent model: We assume that $\left. Y_{i,t} \right|_{Y_{i,1:t-1}}$ follows a Gamma distribution as in \eqref{equation2}, $\theta_{i,t}=1$ for $1 \le t\le  6$ and $1 \le i\le  5000$ but use given $\mu_{i,t}$'s.
    \item Heterogeneous B{\"u}hlmann model: We assume that $\left. Y_{i,t} \right|_{Y_{i,1:t-1}}$ follows a Gamma distribution as in \eqref{equation2}, $\theta_{i,t}=\theta_{i}$ for $1 \le t\le  6$ and $1 \le i\le  5000$ but we use the given $\mu_{i,t}$'s. Note that it is a special case of the next model (heterogeneous SSM) with $\Delta=1$.

    \item Heterogeneous SSM: We assume that $\left. Y_{i,t} \right|_{Y_{i,1:t-1}}$ follows a Gamma distribution as in \eqref{equation2} while we allow for an evolution of $\theta_{i,1:6}$ under Model \ref{model.2}, assuming \eqref{eq.112} holds, and we use teh given $\mu_{i,t}$ for $1 \le t\le  6$ and $1 \le i\le  5000$.
\end{itemize}

Table \ref{tab:simest} summarizes the estimation results of $a_{1|0}$, $\psi$ and $\Delta$ over the 100 simulations, both the mean estimates and the standard errors (in the parenthesis) are presented. Note that the homogeneous/heterogeneous independent models are excluded from the comparison as these model do not attempt to estimate $a_{1|0}$, $\psi$ and $\Delta$ by assuming $\theta$ is constantly equal to one. It can be seen that the heterogeneous SSM accurately estimates the values of $a_{1|0}$, $\psi$ and $\Delta$. In the case of the homogeneous SSM, these estimates are a little bit biased as it  has a misspecified marginal mean structure.

\begin{table}[!h]
\caption{Summary of estimation procedure for true values $a_{1|0}=3$ and $\psi=1$.\label{tab:simest}}
\centering
\begin{tabular}[t]{ccccccc}
\toprule
\multicolumn{1}{c}{ } & \multicolumn{3}{c}{$\Delta=0.5$} & \multicolumn{3}{c}{$\Delta=1.0$} \\
\cmidrule(l{3pt}r{3pt}){2-4} \cmidrule(l{3pt}r{3pt}){5-7}
  & $\widehat{a}_{1|0}$ & $\widehat{\psi}$ & $\widehat{\Delta}$ & $\widehat{a}_{1|0}$ & $\widehat{\psi}$ & $\widehat{\Delta}$\\
\midrule
Homogeneous & 5.5828 & 1.2222 & 1.0000 & 2.9460 & 1.0415 & 1.0000\\
B{\"u}hlmann & (0.2704) & (0.0114) & - & (0.1246) & (0.0088) & -\\\hline
Homogeneous & 2.7620 & 1.0151 & 0.4659 & 2.8364 & 1.0256 & 0.9750\\
SSM & (0.1273) & (0.0138) & (0.0238) & (0.1214) & (0.0104) & (0.0106)\\\hline
Heterogeneous & 5.8727 & 1.1858 & 1.0000 & 3.0151 & 1.0004 & 1.0000\\
B{\"u}hlmann & (0.2433) & (0.0108) & - & (0.1079) & (0.0083) & -\\\hline
Heterogeneous & 3.0279 & 1.0017 & 0.5027 & 2.9952 & 0.9977 & 0.9957\\
SSM & (0.1228) & (0.0135) & (0.0234) & (0.1077) & (0.0093) & (0.0065)\\
\bottomrule
\end{tabular}
\end{table}

Tables \ref{tab:simval0.5} and \ref{tab:simval1.0} display the out-of-sample validation results. The means and standard errors (in parenthesis) are received from the 100 iterations under the different values of $\Delta$, which imply different state-space structures for the evolution of the latent factors $(\Theta_{i,t})_{t=1}^{T+1}$.
The reported figures are the root-mean squared error (RMSE) and the Gamma deviance (GDEV) defined by
$$
\begin{aligned}
\text{RMSE}(\widehat{\mu}_{1:M}, Y_{1:M}|v_{1:M}) &= \sqrt{\frac{1}{M}\sum_{i=1}^M (\widehat{\mu}_{i}v_{i} - Y_{i})^2}, \\
\text{GDEV}(\widehat{\mu}_{1:M}, Y_{1:M}|v_{1:M}) &= 2\sum_{i=1}^M \left (-v_{i}  \log\left(\frac{Y_{i}}{\widehat{\mu}_{i} v_{i}} \right) + \frac{Y_{i}- \widehat{\mu}_{i}v_{i}}{\widehat{\mu}_{i}}\right),
\end{aligned}
$$
where we set $-v_{i} \log\left({Y_{i}}/(\widehat{\mu}_{i} v_{i}) \right)=0$ if $Y_i = v_i =0$, and we dropped the lower time
index $T+1$.

Table \ref{tab:simval0.5} considers the case $\Delta=0.5$, and it
shows that the true model (heterogeneous SSM) outperforms all other models in both measures. This is expected as it is the true model to describe the data generation scheme. In the case of $\Delta=1.0$,
Table \ref{tab:simval1.0},
the heterogeneous B{\"u}hlmann model is the true model so it shows the best predictive performance. Meanwhile, it is interesting to observe that the SSMs are capable of estimating the correct value of $\Delta$ and their predictive performance is almost comparable to the true B{\"u}hlmann model (in this case). Therefore, we conclude that the proposed SSMs are flexible enough to find the true model.

\begin{table}[!h]
\caption{Summary of out-of-sample validation with simulations ($a_{1|0}=3, \psi=1$, $\Delta=0.5$).\label{tab:simval0.5}}
\centering
\begin{tabular}[t]{ccccccc}
\toprule
\multicolumn{1}{c}{ } & \multicolumn{3}{c}{RMSE} & \multicolumn{3}{c}{GDEV} \\
\cmidrule(l{3pt}r{3pt}){2-4} \cmidrule(l{3pt}r{3pt}){5-7}
  & Independent & B{\"u}hlmann & SSM & Independent & B{\"u}hlmann & SSM\\
\midrule
Homogeneous & 6067.64 & 6133.54 & 5981.14 & 6680.86 & 6764.54 & 6499.17\\
 & (724.26) & (690.46) & (701.65) & (180.72) & (189.24) & (175.27)\\
 \hline
Heterogeneous & 5974.26 & 6025.61 & 5878.33 & 6417.36 & 6473.82 & 6229.60\\
 & (727.88) & (695.29) & (704.12) & (176.98) & (190.43) & (175.06)\\
\bottomrule
\end{tabular}
\end{table}

\begin{table}[!h]
\caption{Summary of out-of-sample validation with simulations ($a_{1|0}=3, \psi=1$, $\Delta=1.0$).\label{tab:simval1.0}}
\centering
\begin{tabular}[t]{ccccccc}
\toprule
\multicolumn{1}{c}{ } & \multicolumn{3}{c}{RMSE} & \multicolumn{3}{c}{GDEV} \\
\cmidrule(l{3pt}r{3pt}){2-4} \cmidrule(l{3pt}r{3pt}){5-7}
  & Independent & B{\"u}hlmann & SSM & Independent & B{\"u}hlmann & SSM\\
\midrule
Homogeneous & 5962.36 & 5116.05 & 5118.61 & 6639.29 & 5206.91 & 5212.95\\
 & (469.85) & (355.58) & (359.82) & (192.44) & (119.02) & (119.25)\\
 \hline
Heterogeneous & 5868.95 & 4939.25 & 4939.39 & 6371.98 & 4901.10 & 4901.98\\
 & (471.67) & (351.47) & (351.82) & (186.97) & (116.44) & (116.38)\\
\bottomrule
\end{tabular}
\end{table}

\section{Real data analysis}
\label{section real study}
To assess the applicability of the proposed framework, we analyze a
U.S.~based longitudinal outpatient visit dataset for the years 2019--2022, which is a part of the database called \href{https://meps.ahrq.gov/mepsweb/index.jsp}{Medical Expenditure Panel Survey (MEPS)}.
The original dataset, which contains $30,079$ records for $4,928$ patients over the years 2019--2022, is a combination of the following four tables in the MEPS database; \href{https://meps.ahrq.gov/mepsweb/data_stats/download_data_files_detail.jsp?cboPufNumber=HC-213F}{HC-213F}, \href{https://meps.ahrq.gov/mepsweb/data_stats/download_data_files_detail.jsp?cboPufNumber=HC-220F}{HC-220F}, \href{https://meps.ahrq.gov/mepsweb/data_stats/download_data_files_detail.jsp?cboPufNumber=HC-229F}{HC-229F}, and \href{https://meps.ahrq.gov/mepsweb/data_stats/download_data_files_detail.jsp?cboPufNumber=HC-239F}{HC-239F}. As displayed in Table \ref{tab:sample_original}, it contains various fields such as the identifier for a patient in the panel (\texttt{DUPERSID}), year and month information for the outpatient visit (\texttt{OPDATEYR} and \texttt{OPDATEMM}), categorical covariates that explain the characteristics of the outpatient visit (for example, \texttt{SEEDOC\_M18} and \texttt{LABTEST\_M18}; see the \href{https://meps.ahrq.gov/data_stats/download_data/pufs/h239f/h239fcb.pdf}{codebook} to check the comprehensive list of the covariates and their descriptions), and the total medical expenses charged for the visit (\texttt{OPTC[YR]X}).

\begin{table}[!h]
\caption{Sample rows from the original  dataset.\label{tab:sample_original}}
\centering
\begin{tabular}{ccccccc}
\toprule
\texttt{DUPERSID}   & \texttt{OPDATEYR} & \texttt{OPDATEMM} & \texttt{SEEDOC\_M18} & \texttt{LABTEST\_M18} & $\cdots$ & \texttt{OPTC[YR]X} \\ \midrule
2460002101 & 2021 & 8     & Y           & Y            & $\cdots$ & 6407.5        \\
2460002101 & 2022 & 10    & Y           & Y            & $\cdots$ & 5344          \\
$\vdots$        & $\vdots$  & $\vdots$   & $\vdots$         & $\vdots$          & $\ddots$ & $\vdots$           \\
2460004102 & 2021 & 3     & N           & N            & $\cdots$ & 83.01         \\
2460004102 & 2021 & 10    & N           & N            & $\cdots$ & 100.01  \\
\bottomrule
\end{tabular}
\end{table}

\subsection{Data pre-processing with a working model}
While the dataset contains rich information with repeated measurements, it requires some steps of pre-processing to be analyzed under the proposed framework. In the Generalized Smith--Miller Model, we implicitly assume that the state-space variables $(\Theta_t)_{t \ge 1}$ affect each period of the same length (for example, a year or a month) whereas the original dataset is not recorded in that fashion. In other words, the observed value of \texttt{OPTC[YR]X} in each record  in the original dataset corresponds to an individual payment $Z_{t[j]}$, as described in \eqref{eqn:distid}, not to $Y_t$ in \eqref{equation2} for a time period.

Therefore, the observed records for a person $i$ (identified by \texttt{DUPERSID}, here) need to be aggregated within pre-specified periods $t\ge 1$ to give us $(Y_{i,t}, v_{i,t}, \mu_{i,t})_{i,t}$, where $Y_{i,t}$ is the sum of \texttt{OPTC[YR]X} for person $i$ in period $t$,  $v_{i,t}$ is the total number of the outpatient visits for person $i$ in period $t$, and $\mu_{i,t}=\mathbb{E}[Z_{t[j]}]$ is the expectation of \texttt{OPTC[YR]X} per outpatient visit of person $i$ in period $t$. While the aggregations of $Y_{i,t}$ and $v_{i,t}$ are straightforward and do not involve any uncertainty, $\mu_{i,t}$ is unknown at this stage so we need a working model that estimates the marginal mean severity $\mu_{i,t}$ based on the observed values of the response variables $Z_{i,t[j]}$ (which is \texttt{OPTC[YR]X} in this case) and the corresponding covariates in the original dataset.

For this purpose, we exploit a preliminary Gamma
generalized linear model (GLM) assuming
$$
 Z_{i,t[j]}
   \sim \Gamma\left( \frac{1}{\psi}, {\frac{1}{\widetilde{\mu}_{i,t[j]}\psi}}\right), \qquad \text{ with regression mean: }  \log\left(\widetilde{\mu}_{i,t[j]}\right)= \mathbf{x}_{i,t[j]} \boldsymbol{\beta},
$$
where $\mathbf{x}_{i,t[j]}$ are pre-processed covariates from the original dataset as described in Table \ref{tab:desc}, and GLM
parameter $\boldsymbol{\beta}$. For the covariates pre-processing for the working model, we treated no response or missing values for \texttt{Care\_Category}, \texttt{Special\_Cond}, \texttt{Surgery}, \texttt{Prescription} as their proportions were negligible (less than 1\%). In the case of \texttt{Telehealth}, however, we treated the missing values or no response as a separate category, due to the reasons that for all responses from year 2019 records these are missing (as the telehealth indicator was not collected in the survey in year 2019), and about 25\% of records from years 2020--2022 indicate this value as missing.

\begin{table}[h!t!]
\caption{Description of the covariates in the working model. \label{tab:desc}}
\begin{center}
\resizebox{!}{0.2\linewidth}{
\begin{tabular}{l|lrrr}
\toprule
Variables & Description & Response & Proportions
 \\
\hline
\texttt{Doctor\_Type} & Type of medical professional for the visit           & GP & \multicolumn{1}{c}{7.00 \%} \\
 &      & Specialist & \multicolumn{1}{c}{39.27 \%} \\
 &  & Non-doctor & \multicolumn{1}{c}{53.73 \%} \\ \hline
\texttt{Care\_Category} & Main reason of the outpatient visit           & Diagnosis or Treatment & \multicolumn{1}{c}{51.67 \%} \\
&       & Others & \multicolumn{1}{c}{48.33 \%} \\ \hline
\texttt{Special\_Cond} & Indicator for an existing special condition for the visit           & Yes & \multicolumn{1}{c}{83.62 \%} \\
&       & No & \multicolumn{1}{c}{16.38 \%} \\ \hline
\texttt{Surgery} & Whether the visit involved a surgery        & Yes & \multicolumn{1}{c}{8.68 \%} \\
&       & No & \multicolumn{1}{c}{91.32 \%} \\ \hline
\texttt{Prescription} & Whether any medicine prescribed for the visit     & Yes & \multicolumn{1}{c}{8.33 \%} \\
&       & No & \multicolumn{1}{c}{91.67 \%} \\ \hline
\texttt{Telehealth} & Whether the visit was a telehealth event   & Yes & \multicolumn{1}{c}{7.61 \%} \\
&       & No & \multicolumn{1}{c}{40.35 \%} \\
&       & Unknown & \multicolumn{1}{c}{52.04 \%} \\
\hline \hline
\end{tabular}}
\end{center}
\end{table}

The estimation results of the working model are summarized in Table \ref{tab:marginal_est}, which also provides some intuitive interpretations. For example, there is expected to incur higher charge if an outpatient visit involved a meeting with a specialist, rather than a general or family doctor. It is also natural to expect that outpatient visits with surgeries could incur higher charges than those without surgeries. Lastly, it is also shown that a telehealth outpatient visit gives less charges than an in-person outpatient visit.

\begin{table}[!h]
\caption{Summary of the estimated regression coefficients for the working severity model. \label{tab:marginal_est}}
\centering
\begin{tabular}[t]{lcc}
\toprule
  & Estimate & p-value\\
\midrule
(Intercept) & 7.4351 & 0.0000\\
\texttt{Doctor\_Type}: Non-doctor & 0.1304 & 0.0338\\
\texttt{Doctor\_Type}: Specialist & 0.5515 & 0.0000\\
\texttt{Care\_Category}: Others & -0.1390 & 0.0000\\
\texttt{Special\_Cond}: Yes & 0.1912 & 0.0000\\
\texttt{Surgery}: Yes & 1.6888 & 0.0000\\
\texttt{Prescription}: Yes & -0.1080 & 0.0531\\
\texttt{Telehealth}: Unknown & -0.1501 & 0.0000\\
\texttt{Telehealth}: Yes & -0.8352 & 0.0000\\
\bottomrule
\end{tabular}
\end{table}

Note that it is possible that $\mathbf{x}_{i,t[j]}$ may not be identical for all $j=1, \ldots, v_{i,t}$ as one patient can have multiple outpatient visits in a time period due to different reasons. Therefore, we summarized the estimates from the working model $\widetilde{\mu}_{i,t[j]}$ to obtain a reasonable estimate of $\mu_{i,t}$ as $\widehat{\mu}_{i,t} = \frac{1}{v_{i,t}}\sum_{j=1}^{v_{i,t}}\widetilde{\mu}_{i,t[j]}
$. Sample rows from the resulting dataset in the form of $(Y_{i,t}, v_{i,t}, \mu_{i,t})_{i,t}$ are displayed in Table \ref{tab:sample_processed}. Note that we summarized the observations per half-year intervals so that the pre-processed dataset has $39,424 = 4928 \times 8$ records, from $4,928$ distinct patients and $8$ observational periods (from 2019H1 to 2022H2).

\begin{table}[!h]
\caption{Sample rows from the pre-processed  dataset.\label{tab:sample_processed}}
\centering
\begin{tabular}{ccccc}
\toprule
\texttt{ID}   & \texttt{HY} & $Y_{i,t}$ & $v_{i,t}$ & $\widehat{\mu}_{i,t}$  \\ \midrule
2460002101 & 2019H1 & 1209.00   &  1         & 1750.443                \\
2460002101 & 2019H2 & 13712.17   &  4         & 2011.496                \\
2460002101 & 2020H1 & 0   &  0         & -                \\
2460002101 & 2020H2 & 0   &  0         & -                \\
$\vdots$        & $\vdots$  & $\vdots$   & $\vdots$         & $\vdots$                  \\
\bottomrule
\end{tabular}
\end{table}

We conclude this subsection with two remarks. Firstly, it is not our primary concern to find either the best working model or the best feature engineering to estimate the marginal severity, \texttt{OPTC[YR]X}. Possibly, one can conduct a more sophisticated analysis to find the best set of engineered features and/or try different predictive models (including but not limited to random forests or neural networks). However, we believe that the current working model is reasonable enough to quantify the underlying impacts of the available covariate information on the response variables, with some intuitive explanations. Note also that our main goal is to evaluate the usefulness of the proposed SSM framework in comparison to the existing models that could describe the serial correlations among the repeated observations, given $\widehat{\mu}_{i,t}$ from some reasonably good working model.

\subsection{Analysis of the summarized dataset with the proposed model and some benchmarks}

For comparison purposes, we applied the six models (Homogeneous independent, Homogeneous B{\"u}hlmann, Homogeneous SSM,
Heterogeneous independent, Heterogeneous B{\"u}hlmann, and Heterogeneous SSM) that were described in Section \ref{section simulation study}. Recall that all of the aforementioned models implicitly or explicitly assume a stationary variance for $(\Theta^{-1}_{i,t})_{t\ge 1}$ by satisfying \eqref{eq.112}. We also assume $p_{i,t} = \kappa q_{i,t}$ for a constant $\kappa \ge 0$. Thus, the dependence structures are parameterized by the following three parameters $a_{1|0}$, $\psi$ and $\Delta$.

Table \ref{tab:actest} summarizes the estimation results of $a_{1|0}$, $\psi$ and $\Delta$ with the aggregated version of actual data as discribed above. Again, the homogeneous/heterogeneous independent models are excluded from the comparison as these models do not attempt to estimate $a_{1|0}$, $\psi$ and $\Delta$. by assuming $\theta$ is constantly equal to one. It is shown that the estimated values of $\Delta$ under the homogeneous/heterogeneous SSMs  are around 0.3, which implies
a rather fast decay of the impact of the latent factors $(\Theta_{i,t})_{t\ge 1}$.

As mentioned in the proof of Proposition \ref{lemma 1 mean stationarity}, there is an AR(1) type identity in the relationship between $\Theta^{-1}_{i,t}$ and $\Theta^{-1}_{i,t+1}$, that is,
$$
\EE \left[\left.\Theta^{-1}_{i,t+1}\right| Y_{i,1:t}\right] =  \Delta\,\EE \left[\left.\Theta^{-1}_{i,t}\right| Y_{i,1:t}\right]
+ (1-\Delta).
$$
That being said, if $\Delta \simeq 0.3$, then the approximate correlation between $\Theta^{-1}_{i,t}$ and $\Theta^{-1}_{i,t+3}$ is only about $\Delta^3 = 2.7\%$, which is indeed negligible compared to its full magnitude of 100\%. Intuitively, this means that many of the outpatient visits do not occur over a longer time period (possibly for the same reason).

\begin{table}[!h]
\caption{Summary of the parameter estimation on the real dataset. \label{tab:actest}}
\centering
\begin{tabular}[t]{lccc}
\toprule
  & $\widehat{a}_{1|0}$ & $\widehat{\psi}$ & $\widehat{\Delta}$\\
\midrule
Homogeneous B{\"u}hlmann & 0.8257 & 2.0394 & 1.0000\\
Homogeneous SSM & 0.4166 & 0.8746 & 0.3262\\
Heterogeneous B{\"u}hlmann & 1.4543 & 1.7334 & 1.0000\\
Heterogeneous SSM & 0.7150 & 0.8257 & 0.2825\\
\bottomrule
\end{tabular}
\end{table}

Such an impact from seemingly fast-decaying state-space variables are pronounced in the out-of-sample validation results in Table \ref{tab:actval}. It is shown that whether we assume a homogeneous or heterogeneous external mean model, the predictive performance of the B{\"u}hlmann models (which is analogous to assume $\Delta=1$) are worse than that of the independent models (which is analogous to assume $\Delta=0$). On the other hand, the proposed SSMs by far show the best predictive performance by capturing the correct magnitude of the decaying factor of the inverse state-space variables $\Theta^{-1}_{i,t}$ from the data.

\begin{table}[!h]
\caption{Summary of out-of-sample validation on the real dataset.\label{tab:actval}}
\centering
\begin{tabular}[t]{ccccccc}
\toprule
\multicolumn{1}{c}{ } & \multicolumn{3}{c}{RMSE} & \multicolumn{3}{c}{GDEV} \\
\cmidrule(l{3pt}r{3pt}){2-4} \cmidrule(l{3pt}r{3pt}){5-7}
  & Independent & B{\"u}hlmann & SSM & Independent & B{\"u}hlmann & SSM\\
\midrule
Homogenous & 11562.25 & 12688.83 & 10715.44 & 4512.37 & 4612.77 & 4185.06\\
Heterogeneous & 10809.82 & 10940.15 & 10173.28 & 3324.28 & 3532.69 & 3118.87\\
\bottomrule
\end{tabular}
\end{table}

\section{Summary}
\label{Concluding remarks}

Observation-driven state-space models are widely used in time-series modeling due to their analytical tractability in many cases. For claim counts, \cite{harvey1989time} introduced a Poisson-Gamma observation-driven state-space model that is fully analytically tractable, and \cite{ahn2023classification} later extended this model to accommodate flexible variance behavior in the latent state-space dynamics.

For claim size modeling or positive continuous random response modeling, the Gamma distribution is a natural and often preferred choice. \cite{smith1986non} proposed the Gamma-Gamma observation-driven state-space model, offering a fully tractable modeling framework. However, a limitation of the \cite{smith1986non} model is that its variance behavior is constrained to be increasing. This paper aims to generalize the Gamma-Gamma observation-driven state-space model to allow for flexible variance behavior, addressing this limitation.

Looking ahead, it would be valuable to develop a more rigorous classification of parameter-driven versus observation-driven state-space models. In particular, understanding the conditions under which a given response process $(Y_{t})_{t\geq 1}$ admits both parameter-driven and observation-driven state-space representations would be of great interest. Naturally, this involves addressing the identifiability issues, which are not yet fully resolved.

The Poisson-Gamma and Gamma-Gamma cases are part of the exponential dispersion family (EDF) with conjugate priors. A natural next step would be to investigate whether our results extend to the entire class of EDF models with conjugate priors. Additionally, the Gamma state-space process possesses several desirable properties, as exploited in \cite{ahn2023classification}. It would be intriguing to determine the extent to which these properties are necessary to maintain analytical tractability.

\bibliographystyle{apalike}
\bibliography{CTE_Bib_HIX2}

\end{document}